\newtheorem{theorem}{Theorem}
\newtheorem{lemma}{Lemma}
\newtheorem{definition}{Definition}
\newtheorem{corollary}{Corollary}
\newtheorem{example}{Example}
\newtheorem{remark}{Remark}
\newtheorem{case}{Case}
\title{High Rate Single-Symbol ML Decodable Precoded DSTBCs for Cooperative Networks}
\author{Harshan J and B. Sundar Rajan, Senior Member, IEEE
\thanks{This work was supported through grants to B.S.~Rajan; partly by the DRDO-IISc program on Advanced Research in Mathematical Engineering, and partly by the Council of Scientific \& Industrial Research (CSIR, India) Research Grant (22(0365)/04/EMR-II). Part of the content of this paper has been submitted to IEEE International Conference on Communications (ICC 2008).
The authors are with the Department of Electrical Communication Engineering, Indian Institute of Science, Bangalore-560012, India. Email:\{harshan,bsrajan\}@ece.iisc.ernet.in.}}
\begin{document}
\maketitle
%
%%%%%%%%%%%%%%%%%%%%%%%%%%%%%%%%%%%%%
%%%%%%%%%%%%%%%%%%%%%%%%%%%%%%%%%%%%%
\begin{abstract}
Distributed Orthogonal Space-Time Block Codes (DOSTBCs) achieving full diversity order and single-symbol ML decodability have been introduced recently by Yi and Kim for cooperative networks and an upperbound on the maximal rate of such codes along with code constructions has been presented. In this paper, we introduce a new class of Distributed STBCs called Semi-orthogonal Precoded Distributed Single-Symbol Decodable STBCs (S-PDSSDC) wherein, the source performs co-ordinate interleaving of information symbols appropriately before transmitting it to all the relays. It is shown that DOSTBCs are a special case of S-PDSSDCs. A special class of S-PDSSDCs having diagonal covariance matrix at the destination is studied and an upperbound on the maximal rate of such codes is derived. The bounds obtained are approximately twice larger than that of the DOSTBCs. A systematic construction of S-PDSSDCs is presented when the number of relays $K \geq 4$. The constructed codes are shown to achieve the upperbound on the rate when $K$ is of the form 0 or 3 modulo 4. For the rest of the values of $K$, the constructed codes are shown to have rates higher than that of DOSTBCs. It is shown that S-PDSSDCs cannot be constructed with any form of linear processing at the relays when the source doesn't perform co-ordinate interleaving of the information symbols. Simulation result shows that S-PDSSDCs have better probability of error performance than that of DOSTBCs.
\end{abstract}
%%%%%%%%%%%%%%%%%%%%%%%%%%%
\begin{keywords}
Cooperative diversity, single-symbol ML decoding, distributed space-time coding, precoding.
\end{keywords}
%%%%%%%%%%%%%%%%%Introduction%%%%%%%%%%%%%%%%%%%%%%%%%%%
\section{Introduction and preliminaries}

\indent \textbf{C}ooperative communication has been a promising means of achieving spatial diversity without the need of multiple antennas at the individual nodes in a wireless network. The idea is based on the relay channel model, where a set of distributed antennas belonging to multiple users in the network co-operate to encode the signal transmitted from the source and forward it to the destination so that the required diversity order is achieved, \cite{SEA1}-\cite{NBK}. Spatial diversity obtained from such a co-operation is referred to as co-operative diversity. In \cite{JiH1}, the idea of space-time coding devised for point to point co-located multiple antenna systems is applied for a wireless relay network and is referred to as distributed space-time coding. The technique involves a two phase protocol where, in the first phase, the source broadcasts the information to the relays and in the second phase, the relays linearly process the signals received from the source and forward them to the destination such that the signal at the destination appears as a Space-Time Block Code (STBC).\\
\indent Since the work of \cite{SEA1}-\cite{JiH1}, lot of efforts have been made to generalise the various aspects of space-time coding proposed for multiple antenna systems to the co-operative setup. One such important aspect is the design of low-complexity Maximum Likelihood (ML) decodable Distributed Space-Time Block Codes (DSTBCs) - in particular, the design of Single-Symbol ML Decodable (SSD) DSTBCs. For a background on SSD STBCs for MIMO systems, we refer the reader to \cite{TJC} - \cite{SaR}. Through out the paper, we consider DSTBCs that are ML decodable. Two group decodable DSTBCs were introduced in \cite{KiR} through doubling construction using a commuting set of matrices from field extensions. In \cite{JiJ1}, Orthogonal Designs (ODs) and Quasi-orthogonal Designs \cite{WWX} originally proposed for multiple antenna systems have been applied to the co-operative framework. Since the co-variance matrix of additive noise at the destination is a function of (i) the realisation of the channels from the relays to the destination and (ii) the relay matrices, complex orthogonal designs (except for 2 relays - Alamouti code) loose their SSD property in the co-operative setup. In \cite{RaR1}, DSTBCs based on co-ordinate interleaved orthogonal designs \cite{KhR} have been introduced which have reduced decoding complexity. In this set-up, the source performs co-ordinate interleaving of information symbols before transmitting to the relays. In \cite{RaR2}, low decoding complexity  DSTBCs were proposed using Clifford-algebras, wherein the relay nodes are assumed to have the knowledge of the phase component of the source-to-relay channels. A class of  four-group decodable DSTBCs was also proposed in \cite{RAR}.\\
\indent Recently, in \cite{ZhK}, Distributed Orthogonal Space-Time Codes (DOSTBCs) achieving single-symbol decodability have been introduced for co-operative networks. The authors considered a special class of DOSTBCs which make the covariance matrix of the additive noise vector at the destination, a diagonal one and such a class of codes was referred to as row monomial DOSTBCs. Upperbounds on the maximum symbol-rate (in complex symbols per channel use in the second phase) of row monomial DOSTBCs have been derived and a systematic construction of such codes has been proposed. The constructed codes were shown to meet the upperbound for even number of relays. In \cite{ZhK2}, the same authors have derived an upperbound on the symbol-rate of DOSTBCs when the additive noise at the destination is correlated and have shown that the improvement in the rate is not significant when compared to the case when the noise at the destination is uncorrelated \cite{ZhK}.\\
\indent In \cite{SCS} and \cite{ZhK2}, SSD DSTBCs have been studied when the relay nodes are assumed to know the corresponding channel phase information. An upperbound on the symbol rate for such a set up is shown to be $\frac{1}{2}$ which is independent of the number of relays.\\
\indent In \cite{ZhK}, \cite{SCS} and \cite{ZhK2} the source node transmits the information symbols to all the relays with out any processing. On the similar lines of \cite{RaR1} and using the framework proposed in \cite{ZhK}, in this paper, we propose SSD DSTBCs aided by linear precoding of the information vector at the source. In our set-up, we assume that the relay nodes do not have the knowledge of the channel from the source to itself. In particular, it is shown that, co-ordinate interleaving of information symbols at the source along with the appropriate choice of relay matrices, SSD DSTBCs with maximal rates higher than that of DOSTBCs can be constructed. The contributions of this paper can be summarized as follows:
\begin{itemize}
\item A new class of DSTBCs called Precoded DSTBCs (PDSTBCs) (Definition \ref{def_pdstbc}) is introduced where the source performs co-ordinate interleaving of information symbols appropriately before transmitting it to all the relays. Within this class, we identify codes that are SSD and refer to them as Precoded Distributed Single Symbol Decodable STBCs (PDSSDCs) (Definition \ref{D_pdssd}). The well known DOSTBCs studied in \cite{ZhK} are shown to be a special case of PDSSDCs.
\item A set of necessary and sufficient conditions on the relay matrices for the existence of PDSSDCs is proved (Theorem \ref{necc_suf_cond}).
\item Within the set of PDSSDCs, a class of Semi-orthogonal PDSSDCs (S-PDSSDC) (Definition \ref{D_so_updssd}) is defined. The known DOSTBCs are shown to belong to the class of S-PDSSDCs. On the similar lines of \cite{ZhK}, a special class of S-PDSSDCs having a diagonal covariance matrix at the destination is studied and are referred to as row monomial S-PDSSDCs. An upperbound on the maximal symbol-rate of row monomial S-PDSSDCs is derived. It is shown that, the symbol rate of row monomial S-PDSSDC is upperbounded by $\frac{2}{l}$ and $\frac{2}{l + 1}$, when the number of relays, $K$ is of the form $2l$ and $2l + 1$ respectively, where $l$ is any natural number. The bounds obtained are approximately twice larger than that of DOSTBCs.
\item A systematic construction of row-monomial S-PDSSDCs is presented when $K \geq 4$. Codes achieving the upperbound on the symbol rate are constructed when $K$ is 0 or 3 modulo 4. For the rest of the values of $K$, the constructed S-PDSSDCs are shown to have rates higher than that of the DOSTBCs.
\item Precoding of information symbols at the source has resulted in the construction of high rate S-PDSSDCs. In this setup, the relays do not perform co-ordinate interleaving of the received symbols. It is shown that, when the source transmits information symbols to all the relays with out any precoding, and if the relays are allowed to perform linear processing of their received vector, S-PDSSDCs other than DOSTBCs cannot be constructed thereby, necessitating the source to perform coordinate interleaving of information symbols in order to construct high rate S-PDSSDCs.
\end{itemize}
%%%%%%%%%%%%%%%%%%%%%%%%%%ORGANISATIONS%%%%%%%%%%%%%%%%%%%
\indent The remaining part of the paper is organized as follows: In Section \ref{sec2}, along with the signal model,  PDSTBCs are introduced and a special class of it called PDSSDCs is defined. A set of necessary and sufficient conditions on the relay matrices for the existence of PDSSDCs is also derived. In Section \ref{sec3}, S-PDSSDCs are defined and a special class of it called row-monomial S-PDSSDCs are studied. An upperbound on the maximal rate of row-monomial S-PDSSDCs is derived. In Section \ref{sec4}, construction of row-monomial S-PDSSDCs is presented along with some examples. In Section \ref{sec5}, we show that the source has to necessarily perform precoding of information symbols in order to construct high rate S-PDSSDCs. The problem of designing two-dimensional signal sets for the full diversity of RS-PDSSDCs is discussed in Section \ref{sec6} along with some simulation results. Concluding remarks and possible directions for further work constitute Section \ref{sec7}.\\
%%%%%%%%%%%%%%%%%%%%NOTATIONS%%%%%%%%%%%%%%%%%%%%%%%%%%%

\textit{Notations:} Through out the paper, boldface letters and capital boldface letters are used to represent vectors and matrices respectively. For a complex matrix $\textbf{X}$, the matrices $\textbf{X}^*$, $\textbf{X}^T$,  $\textbf{X}^{H}$, $|\textbf{X}|$, $\mbox{Re}~\textbf{X}$ and $\mbox{Im}~\textbf{X}$ denote, respectively, the conjugate, transpose, conjugate transpose, determinant, real part and imaginary part of $\textbf{X}$. The element in the $r_1^{th}$ row and the $r_2^{th}$ column of the matrix $\textbf{X}$ is denoted by $[\textbf{X}]_{r_1,r_2}$. The diagonal matrix $\mbox{diag}\left\lbrace [\textbf{X}]_{1,1}, [\textbf{X}]_{2,2} \cdots [\textbf{X}]_{T,T}\right\rbrace$ constructed from the diagonal elements of a $T \times T$ matrix $\textbf{\textbf{X}}$ is denoted by diag$\left[\textbf{X}\right]$. For complex matrices $\textbf{X}$ and $\textbf{Y}$, $\textbf{X} \otimes \textbf{Y}$ denotes the tensor product of $\textbf{X}$ and $\textbf{Y}$. The tensor product of the matrix $\textbf{X}$ with itself $r$ times where $r$ is any positive integer is represented by $\textbf{X}^{\otimes^r}$. The $ T\times T$ identity matrix and the $T \times T$ zero matrix respectively denoted by $\textbf{I}_T$ and $\textbf{0}_T$. The magnitude of a complex number $x$, is denoted by $|x|$ and $E \left[x\right]$ is used to denote the expectation of the random variable $x.$ A circularly symmetric complex Gaussian random vector, $\textbf{x}$ with mean $\mu$ and covariance matrix $\mathbf{\Gamma}$ is denoted by $\textbf{x} \sim \mathcal{CSCG} \left(\mu, \mathbf{\Gamma} \right) $. The set of all integers, the real numbers and the complex numbers are respectively, denoted by ${\mathbb Z}$, $\mathbb{R}$ and ${\mathbb C}$ and $j$ is used to represent $\sqrt{-1}.$ The set of all $T \times T$ complex diagonal matrices is denoted by $\mathcal{D}_{T}$ and a subset of $\mathcal{D}_{T}$ with strictly positive diagonal elements is denoted by $\mathcal{D}_{T}^{+}.$
%%%%%%%%%%%%%%%%%Precoded DSTBCs%%%%%%%%%%%%%%%%%%%%%%%%%%%%%%%%
\section{Precoded distributed space-time coding}\label{sec2}
\subsection{Signal model}
The wireless network considered as shown in Figure \ref{model_network} consists of $K + 2$ nodes each having single antenna which are placed randomly and independently according to some distribution. There is one source node and one destination node. All the other $K$ nodes are relays. We denote the channel from the source node to the $k^{th}$ relay as $h_{k}$ and the channel from the $k^{th}$ relay to the destination node as $g_{k}$ for $k=1,2, \cdots, K$.
%%%%%%%%%%%%%%%%%%%assumptions in our model%%%%%%%%%%%%%% 
\noindent The following assumptions are made in our model:

\begin{itemize}
\item All the nodes are subjected to half duplex constraint.
\item Fading coefficients  $h_{k},g_{k}$ are i.i.d $ \mathcal{CSCG} \left(0,1 \right)$ with coherence time interval of atleast $N$ and $T$ respectively.
\item All the nodes are synchronized at the symbol level.
\item Relay nodes do not have the knowledge of fade coefficients $h_{k}$.
\item Destination knows the fade coefficients $g_{k}$, $h_{k}$.
\end{itemize}
%%%%%%%%%%%%%%%%%%%%%First phase of model%%%%%%%%%%%%%
\noindent The source is equipped with a $N$ length complex vector from the codebook $\mathcal{S}$ = $\left\{ \textbf{s}_{1},\, \textbf{s}_{2},\, \textbf{s}_{3},\, \cdots, \textbf{s}_{L} \right\} $ consisting of information vectors $\textbf{s}_{l} \in \mathbb{C}^{1 \times N}$ such that $E\left[\textbf{s}_{l}\textbf{s}_{l}^{H}\right]$ = 1 for all $l= 1,\cdots, L$. The source is also equipped with a pair of $N \times N$ matrices $\textbf{P}$ and $\textbf{Q}$ called precoding matrices. Every transmission from the source to the destination comprises of two phases. When the source needs to transmit an information vector $\textbf{s} \in \mathcal{S}$ to the destination, it generates a new vector $\tilde{\textbf{s}}$ as,
\begin{equation}
\label{cord_int}
\tilde{\textbf{s}} = \textbf{s} \textbf{P} + \textbf{s}^{*}\textbf{Q}
\end{equation}
such that $E\left[\tilde{\textbf{s}}\tilde{\textbf{s}}^{H}\right]$ = 1 and broadcasts the vector $\tilde{\textbf{s}}$ to all the $K$ relays (but not to the destination). The received vector at the $k^{th}$ relay is given by $\textbf{r}_{k} = \sqrt{P_{1}N}h_{k}\tilde{\textbf{s}} + \textbf{n}_{k}$, for all $\textit{ k} = 1,2,\cdots, K$ where $\textbf{n}_{k} \sim \mathcal{CSCG} \left(0,\textbf{I}_{N} \right) $ is the additive noise at the $k^{th}$ relay and $P_{1}$ is the total power used at the source node every channel use. In the second phase, all the relay nodes are scheduled to transmit $T$ length vectors to the destination simultaneously. Each relay is equipped with a fixed pair of $N \times T$ rectangular matrices $\textbf{A}_{k}$, $\textbf{B}_{k}$ and is allowed to linearly process the received vector. The $k^{th}$ relay is scheduled to transmit
\begin{equation}\label{t1j}
\textbf{t}_{k} = \sqrt{\frac{P_{2}T}{(1 + P_{1})N}}\left\lbrace \textbf{r}_{k}\textbf{A}_{k} + \textbf{r}_{k}^{*}\textbf{B}_{k}\right\rbrace .
\end{equation}
where $P_{2}$ is the total power used at each relay every channel use in the second phase. The vector received at the destination is given by
\begin{equation}\label{bfy}
\textbf{y} = \sum_{k = 1}^{K} g_{k}\textbf{t}_{k} + \textbf{w}
\end{equation}
\noindent where $\textbf{w} \sim \mathcal{CSCG} \left(0,\textbf{I}_{T} \right)$ is the additive noise at the destination. Using \eqref{t1j} in \eqref{bfy},  \textbf{y} can be written as
\begin{equation*}
\label{bfy1}
\textbf{y} = \sqrt{\frac{P_{1}P_{2}T}{(1 + P_{1})N}}\textbf{g}\textbf{X} + \textbf{n}
\end{equation*}
%%%%%%%%%%%%%%%%%%%%%%%%Noise vector%%%%%%%%%%%%%%%%%%%%%%
\noindent where
\begin{itemize}
\item $\textbf{n} = \sqrt{\frac{P_{2}T}{(1 + P_1)N}}\left[ \sum_{k=1}^{K} g_{k}\left\lbrace \textbf{n}_{k}\textbf{A}_{k} + \textbf{n}_{k}^{*}\textbf{B}_{k}\right\rbrace \right]  + \textbf{w} \label{bfN}.$
\item The equivalent channel \textbf{g} is given by $[g_{1} ~ g_{2} ~ \cdots ~ g_{K} ] \in \mathbb{C}^{1 \times K}.$
\item Every codeword $\textbf{X} \in \mathbb{C}^{K \times T}$ is of the form,
{\small
\begin{equation*}
\textbf{X} = \left[ \left[h_{1}\tilde{\textbf{s}}\textbf{A}_{1} + h_{1}^{*}\tilde{\textbf{s}}^{*}\textbf{B}_{1} \right]^{T} ~~ \left[h_{2}\tilde{\textbf{s}}\textbf{A}_{2} + h_{2}^{*}\tilde{\textbf{s}}^{*}\textbf{B}_{2}\right]^{T} ~~ \cdots ~~ \left[h_{K}\tilde{\textbf{s}}\textbf{A}_{K} + h_{K}^{*}\tilde{\textbf{s}}^{*}\textbf{B}_{K}\right]^{T} \right]^{T}.
\end{equation*}
}
\end{itemize}
%%%%%%%%%%%CODE%%%%%%%%%%%%%%%%%%%%%%%%%%%%%%%%%%%
\begin{definition}
\label{def_pdstbc}
\noindent The collection $\mathcal{C}$ of $K \times T$ codeword matrices shown below, where $\textbf{s}$ runs over a codebook $\mathcal{S}$,
{\small
\begin{equation}
\label{dstbc}
\mathcal{C} = \left\{ \left[ \left[h_{1}\tilde{\textbf{s}}\textbf{A}_{1} + h_{1}^{*}\tilde{\textbf{s}}^{*}\textbf{B}_{1} \right]^{T} ~~ \left[h_{2}\tilde{\textbf{s}}\textbf{A}_{2} + h_{2}^{*}\tilde{\textbf{s}}^{*}\textbf{B}_{2}\right]^{T} ~~ \cdots ~~ \left[h_{K}\tilde{\textbf{s}}\textbf{A}_{K} + h_{K}^{*}\tilde{\textbf{s}}^{*}\textbf{B}_{K}\right]^{T} \right]^{T} \right\}
\end{equation}
}
\noindent is called the Precoded Distributed Space-Time Block code (PDSTBC) which is determined by the set $\left\lbrace \textbf{P}, \textbf{Q}, \textbf{A}_{k}, \textbf{B}_{k}\right\rbrace$.
\end{definition}
\begin{remark} From \eqref{dstbc}, every codeword of a PDSTBC includes random variables $h_{k}$ for all $k = 1, 2, \cdots K$. Even though, $h_{k}$ can take any complex value, since the destination knows the channel set $\left\lbrace h_{1}, h_{2}, \cdots h_{K}\right\rbrace$ for every codeword use, the cardinality of $\mathcal{C}$ is equal to the cardinality of $\mathcal{S}$. The properties of the PDSTBC will depend on the set $\left\lbrace \textbf{P}, \textbf{Q}, \textbf{A}_{k}, \textbf{B}_{k}\right\rbrace$ alone but not on the realisation of the channels $h_{k}$'s. In this paper, on the similar lines of \cite{ZhK}, we derive conditions on the set $\left\lbrace \textbf{P}, \textbf{Q}, \textbf{A}_{k}, \textbf{B}_{k}\right\rbrace$ such that the PDSTBC in \eqref{dstbc} is SSD for any values of $\left\lbrace h_{1}, h_{2}, \cdots h_{K}\right\rbrace$. In other words, the derived conditions are such that irrespective of the realisation of $h_{k}$'s, the PDSTBC in \eqref{dstbc} is SSD.
\end{remark}

%%%%%%%%%%%%%%%%%%%%%%%%%%%%%%%%%%%%%%%%%%%%%%%%%%%
%\subsection{Pre-coded Distributed Single-Symbol Decodable STBC}
%%%%%%%%%%%%%%%%%%%Decoding and covariance%%%%%%%%%%%%%%%%%%% 
\indent The covariance matrix $\textbf{R} \in \mathbb{C}^{T \times T}$ of the noise vector $\textbf{n}$ is given by
{\small
\begin{equation}
\label{covariance_matrix}
\textbf{R} = {\frac{P_{2}T}{(1 + P_1)N}}\left[ \sum_{k=1}^{K} |g_{k}|^{2}\left\lbrace \textbf{A}_{k}^{H}\textbf{A}_{k} + \textbf{B}_{k}^{H}\textbf{B}_{k}\right\rbrace \right]  + \textbf{I}_{T}.\\
\end{equation}
}
The Maximum Likelihood (ML) decoder decodes to a vector $\hat{\textbf{s}}$ where
{\small
\begin{equation*}
\hat{\textbf{s}} = arg\, \min_{s \in \mathcal{S}} \left[ \textbf{y} - \sqrt{\frac{P_{1}P_{2}T}{(1 + P_{1})N}}\textbf{g}\textbf{X}\right]\textbf{R}^{-1}\left[ \textbf{y} - \sqrt{\frac{P_{1}P_{2}T}{(1 + P_{1})N}}\textbf{g}\textbf{X}\right]^{H}\\
\end{equation*}
\begin{equation*}
\label{ML}
 = arg\, \min_{s \in \mathcal{S}} \left[ -2Re \left(\sqrt{\frac{P_{1}P_{2}T}{(1 + P_{1})N}}\textbf{g}\textbf{X}\textbf{R}^{-1}\textbf{y}^{H}\right) + {\frac{P_{1}P_{2}T}{(1 + P_{1})N}}\textbf{g}\textbf{X}\textbf{R}^{-1}\textbf{X}^{H}\textbf{g}^{H} \right].
\end{equation*}
}

\noindent With the above decoding metric, we give a definition for a SSD distributed space-time block code which also includes DOSTBCs studied in \cite{ZhK}.
%%%%%%%%%%%%%%%DEFINITION OF A PDSSD%%%%%%%%%%%%%%%%%%%%%%%%%
\begin{definition}
\label{D_pdssd}
A PDSTBC, $\textbf{X}$ in variables $x_{1}, x_{2}, \cdots x_{N}$ is called a Precoded Distributed Single-Symbol Decodable STBC (PDSSDC), if it satisfies the following conditions,
\begin{itemize}
\item The entries of the $k^{th}$ row of $\textbf{X}$ are 0, $\pm~h_{k}\tilde{x}_{n}$, $\pm~ h_{k}^{*}\tilde{x}^{*}_{n}$ or multiples of these by $j$ where $j = \sqrt{-1}$ for any complex variable $h_{k}$. The complex variables $\tilde{x}_{n}$ for $1 \leq n \leq N$ are the components of the transmitted vector $\tilde{\textbf{s}}$ where $\tilde{\textbf{s}} = \left[\tilde{x}_{1}~\tilde{x}_{2}~\cdots~\tilde{x}_{N} \right]$.
\item The matrix $\textbf{X}$ satisfies the equality 
\begin{equation}
\label{PDSSD_condition}
\textbf{X}\textbf{R}^{-1}\textbf{X}^{H} = \sum_{i = 1}^{N} \textbf{W}_{i} \mbox{ with } \left[\textbf{W}_{i}\right]_{k,k} =  |h_{k}|^{2}\left(\upsilon_{i,k}^{\left(1\right)}|x_{iI}|^{2} + \upsilon_{i,k}^{\left(2\right)}|x_{iQ}|^{2}\right)
\end{equation}
where each $\textbf{W}_{i}$ is a $K \times K$ matrix with its non zero entries being functions of $x_{iI}$, $x_{iQ}$ and $h_{k}$ for all $k = 1, 2, \cdots, K$ and $\upsilon_{i,k}^{\left(1\right)}, \upsilon_{i,k}^{\left(2\right)} \in \mathbb{R}$.
\end{itemize}
\end{definition}

%%%%%%%%%%%%%Properties set 1 of relay matrices of PDSSD%%%%%%%%%%%%%%%%%%
\indent We study the properties of the relay matrices $\textbf{A}_{k}, \textbf{B}_{k}$ and the precoding matrices $\textbf{P}$ and $\textbf{Q}$ such that the vectors transmitted simultaneously from all the relays appear as a PDSSDC at the destination. Certain properties of the relay matrices have been studied in the context of DOSTBCs in \cite{ZhK}. We recall some of the definitions and properties used in \cite{ZhK} so as to study the properties of the relay matrices of a PDSSDC. A matrix is said to be column (row) monomial, if there is atmost one non-zero entry in every column (row) of it. 
\begin{lemma}
\label{Relay_Mat_condition_1}
The relay matrices $\textbf{A}_{k}$ and $\textbf{B}_{k}$ of a PDSSDC satisfy the following conditions,
\begin{itemize}
\item The entries of $\textbf{A}_{k}$ and $\textbf{B}_{k}$ are 0, $\pm~ 1, \pm~ j$.
\item $\textbf{A}_{k}$ and $\textbf{B}_{k}$ cannot have non-zeros at the same position.
\item $\textbf{A}_{k}$, $\textbf{B}_{k}$ and $\textbf{A}_{k} + \textbf{B}_{k}$ are column monomial matrices.
\end{itemize}
\end{lemma}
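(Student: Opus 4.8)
The plan is to read off all three properties directly from the first defining condition of a PDSSDC (Definition \ref{D_pdssd}), by writing the $(k,t)$ entry of $\textbf{X}$ explicitly and matching it against the admissible monomial forms. Since the $k^{th}$ row of $\textbf{X}$ equals $h_{k}\tilde{\textbf{s}}\textbf{A}_{k} + h_{k}^{*}\tilde{\textbf{s}}^{*}\textbf{B}_{k}$, its $t^{th}$ entry is
\begin{equation*}
[\textbf{X}]_{k,t} = h_{k}\sum_{n=1}^{N}\tilde{x}_{n}[\textbf{A}_{k}]_{n,t} + h_{k}^{*}\sum_{n=1}^{N}\tilde{x}_{n}^{*}[\textbf{B}_{k}]_{n,t}.
\end{equation*}
By the Remark following Definition \ref{def_pdstbc}, the required structure must hold for every realisation of $h_{k}$, so I may treat $h_{k}$ and $h_{k}^{*}$ as algebraically independent; likewise the components $\tilde{x}_{1},\ldots,\tilde{x}_{N}$ of $\tilde{\textbf{s}}$ are regarded as formal variables, as the definition itself declares. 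The entire argument is then a coefficient comparison in these indeterminates.

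First I would isolate the $h_{k}$-part $\sum_{n}\tilde{x}_{n}[\textbf{A}_{k}]_{n,t}$. By the first bullet of Definition \ref{D_pdssd} the entry $[\textbf{X}]_{k,t}$ must be $0$, $\pm h_{k}\tilde{x}_{n}$, $\pm h_{k}^{*}\tilde{x}_{n}^{*}$ or a multiple of one of these by $j$; hence the part carrying $h_{k}$ can only be $0$, $\pm\tilde{x}_{n}$ or $\pm j\tilde{x}_{n}$. Comparing coefficients of the independent variables $\tilde{x}_{n}$ forces the $t^{th}$ column of $\textbf{A}_{k}$ to contain at most one non-zero entry, and that entry to lie in $\{\pm 1,\pm j\}$. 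The identical argument applied to the $h_{k}^{*}$-part $\sum_{n}\tilde{x}_{n}^{*}[\textbf{B}_{k}]_{n,t}$ yields the same conclusion for $\textbf{B}_{k}$. This simultaneously establishes the first bullet of the lemma (all entries lie in $\{0,\pm 1,\pm j\}$) and the column-monomiality of $\textbf{A}_{k}$ and $\textbf{B}_{k}$.

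The remaining point is that $[\textbf{X}]_{k,t}$ is a \emph{single} monomial, not a sum of two terms. If, for some column $t$, both $\textbf{A}_{k}$ and $\textbf{B}_{k}$ had a non-zero entry in that column, then $[\textbf{X}]_{k,t}$ would contain one term in $h_{k}$ and one term in $h_{k}^{*}$, which is not among the admissible forms. Hence in each column at most one of $\textbf{A}_{k}$, $\textbf{B}_{k}$ is non-zero. This immediately gives the second bullet (the two matrices cannot be non-zero at a common position, since they are not even non-zero in a common column) and, combined with the column-monomiality already obtained, the third bullet: each column of $\textbf{A}_{k}+\textbf{B}_{k}$ receives a non-zero contribution from at most one of the two matrices and from at most one row of it, so no cancellation occurs and $\textbf{A}_{k}+\textbf{B}_{k}$ is column monomial.

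I expect the only genuinely delicate step to be the justification for treating $h_{k}$, $h_{k}^{*}$ and the $\tilde{x}_{n}$ as independent indeterminates; everything else is mechanical coefficient matching. This is licensed by the Remark (the single-symbol-decodable structure is demanded for all $h_{k}$) together with Definition \ref{D_pdssd}, which designates the $\tilde{x}_{n}$ as the variables of the code. Once that formal viewpoint is adopted, the second defining condition \eqref{PDSSD_condition} plays no role here: it is the row structure alone that pins down the relay matrices.
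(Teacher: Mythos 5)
Your proof is correct and follows essentially the same route as the paper, which simply defers to the proof of Lemma~1 in \cite{ZhK}: there, too, the three properties are read off from the admissible entry forms of the $k^{th}$ row $h_{k}\tilde{\textbf{s}}\textbf{A}_{k}+h_{k}^{*}\tilde{\textbf{s}}^{*}\textbf{B}_{k}$ by coefficient comparison, with the variables and $h_{k},h_{k}^{*}$ treated as independent indeterminates, and the orthogonality condition \eqref{PDSSD_condition} playing no role. Your write-up merely makes explicit, for the precoded setting, the formal-variable viewpoint on the $\tilde{x}_{n}$ that Definition \ref{D_pdssd} already adopts.
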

\begin{proof}
The proof is on the similar lines of the proof for Lemma $1$ in \cite{ZhK}.
\end{proof}
%%%%%%%%%%%%%%%%%%%%%%%%%%%%%%%%%%%%%%%%%%%%%%%%%%%%%%%%%
%%%%%%%%%%%%%%%%%%%%%%%%%%%%%%%%
%%%%%%%%%%%%%%%%%%%%%%%%%%%%%%%%%%%%%%%%%%%%%%%%%%%%%%%
%%%%%%%%%%THIS LEMMA IS USED TO PROVE NECCEARY AND SUFF CONDITIONS of PDSSD
%%%%%%%%%%%%%%%%%%%%%%%%%%%%%%%%%%%%%%%%%%%%%%%%%%%%%%
\begin{lemma}
\label{sept_cond}
If $\textbf{A}, \textbf{C}, \textbf{D} \in \mathbb{C}^{N \times N}$ and $\textbf{s} = \left[ x_{1}, x_{2}, \cdots, x_{N}\right] \in \mathbb{C}^{1 \times N}$, with each $x_{i} =  x_{iI} + j x_{iQ}$, then
\begin{equation}
\textbf{s}\textbf{A}\textbf{s}^{H} + \textbf{s}\textbf{C}\textbf{s}^{T} + \textbf{s}^{*}\textbf{D}\textbf{s}^{H}  = \sum_{i=1}^{N} f_{i}\left(x_{iI}, x_{iQ}\right) 
\end{equation}
where $f_{i}\left(x_{iI}, x_{iQ}\right)$ is a complex valued function of the variables $x_{iI}$ and $x_{iQ}$ if and only if $\textbf{A}, ~\textbf{C} + \textbf{C}^{T}, ~\textbf{D}+ \textbf{D}^{T} \in \mathcal{D}_{N}$. 
\end{lemma}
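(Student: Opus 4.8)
The plan is to expand the left-hand side into complex monomials, separate the genuinely ``cross-index'' terms from the ``single-index'' ones, and then show that the separability requirement forces exactly the three stated matrix conditions. Writing $\textbf{s} = \left[x_{1}, \ldots, x_{N}\right]$, I would first expand
\begin{equation*}
\textbf{s}\textbf{A}\textbf{s}^{H} + \textbf{s}\textbf{C}\textbf{s}^{T} + \textbf{s}^{*}\textbf{D}\textbf{s}^{H} = \sum_{r,c} [\textbf{A}]_{r,c}\, x_{r}x_{c}^{*} + \sum_{r,c} [\textbf{C}]_{r,c}\, x_{r}x_{c} + \sum_{r,c} [\textbf{D}]_{r,c}\, x_{r}^{*}x_{c}^{*},
\end{equation*}
and split each sum into its diagonal part ($r=c$) and its off-diagonal part ($r \neq c$). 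The diagonal part equals $\sum_{r}\left( [\textbf{A}]_{r,r}|x_{r}|^{2} + [\textbf{C}]_{r,r}x_{r}^{2} + [\textbf{D}]_{r,r}x_{r}^{*2}\right)$, which is already of the required form $\sum_{r} f_{r}(x_{rI},x_{rQ})$ and hence imposes no constraint. So the lemma reduces to the claim that the off-diagonal remainder can be absorbed into $\sum_{i} f_{i}$ if and only if it is identically zero, and that this happens precisely when the three matrix conditions hold.

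The first key step is a separability criterion: since every off-diagonal monomial $x_{r}x_{c}^{*}$, $x_{r}x_{c}$, $x_{r}^{*}x_{c}^{*}$ with $r \neq c$ involves the real/imaginary parts of two distinct indices, a sum $\sum_{i} f_{i}(x_{iI},x_{iQ})$ whose summands each depend on a single index can reproduce the off-diagonal remainder only if every such cross-index monomial cancels. I would make this rigorous (and handle the fact that the $f_{i}$ are a priori arbitrary complex-valued functions, not polynomials) by setting all but one block of variables $\{x_{iI},x_{iQ}\}$ to zero; since every off-diagonal monomial then vanishes, each $f_{i}$ must be constant on the cross-terms, forcing the off-diagonal remainder to be identically zero. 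I expect this to be the main obstacle, because one must argue carefully that no conspiracy of arbitrary functions $f_{i}$ can cancel a nonzero cross-index polynomial.

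Once the remainder is forced to vanish identically, I would fix an unordered pair $\{r,c\}$ and expand its four contributing monomials $x_{r}x_{c}^{*}$, $x_{c}x_{r}^{*}$, $x_{r}x_{c}$, $x_{r}^{*}x_{c}^{*}$ in the real basis $\{x_{rI}x_{cI},\, x_{rQ}x_{cQ},\, x_{rI}x_{cQ},\, x_{rQ}x_{cI}\}$. Because these four real monomials arise only from the pair $\{r,c\}$, their coefficients must vanish separately, giving a $4 \times 4$ real linear system in the parameters $[\textbf{A}]_{r,c}$, $[\textbf{A}]_{c,r}$, $[\textbf{C}]_{r,c}+[\textbf{C}]_{c,r}$ and $[\textbf{D}]_{r,c}+[\textbf{D}]_{c,r}$. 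Taking suitable sums and differences of the four equations yields $[\textbf{A}]_{r,c}=[\textbf{A}]_{c,r}=0$ and $[\textbf{C}]_{r,c}+[\textbf{C}]_{c,r}=[\textbf{D}]_{r,c}+[\textbf{D}]_{c,r}=0$ for every $r \neq c$, which is exactly the assertion that $\textbf{A}$, $\textbf{C}+\textbf{C}^{T}$ and $\textbf{D}+\textbf{D}^{T}$ lie in $\mathcal{D}_{N}$. The converse is then immediate: under these conditions the expansion of the off-diagonal remainder reverses to zero, so the whole expression collapses to its diagonal, single-index part, completing the equivalence.
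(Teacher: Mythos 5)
Your proof is correct and complete. Note that the paper does not actually contain its own proof of this lemma: it defers entirely to Lemma 2 of the technical report \cite{HaR}, so there is no in-paper argument to compare against; your argument is the natural (and, to my knowledge, the standard) one for this kind of separability statement. The two points that matter are both handled properly: (i) the reduction of the problem to showing the cross-index remainder vanishes identically is justified by your set-all-but-one-block-to-zero argument, which correctly deals with the fact that the $f_{i}$ are arbitrary complex-valued functions rather than polynomials (each difference $h_{i} = f_{i} - g_{i}$ is forced to be constant, and evaluating at the origin forces the constant sum to be zero); and (ii) for each pair $r \neq c$, the four real monomials $x_{rI}x_{cI}$, $x_{rQ}x_{cQ}$, $x_{rI}x_{cQ}$, $x_{rQ}x_{cI}$ indeed occur only through that pair, and the resulting $4 \times 4$ system in $[\textbf{A}]_{r,c}$, $[\textbf{A}]_{c,r}$, $[\textbf{C}]_{r,c}+[\textbf{C}]_{c,r}$, $[\textbf{D}]_{r,c}+[\textbf{D}]_{c,r}$ has only the zero solution, which is exactly the statement that $\textbf{A}$, $\textbf{C}+\textbf{C}^{T}$ and $\textbf{D}+\textbf{D}^{T}$ are diagonal. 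The converse direction is, as you say, immediate since the off-diagonal contributions cancel in pairs and the surviving diagonal terms depend only on single-index variables.
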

\begin{proof}
Refer to the proof of Lemma $2$ in \cite{HaR}.
\end{proof}
%%%%%%%%%%%%%%%%%%%%%%%%%%%%%%%%%%%%%%%%%%%%%%%%%%%%%%%%%%%
%This lemma gives the necessary and sufficient condition on the relay matrices for X to be a PDSSD %%%%%%%%%%%%%%%%%%%%%%%%%%%%%%%%%%%%%%%%%%%%%%%%%%%%%%%%%%%
Using the results of Lemma \ref{sept_cond}, in the following Theorem, we provide a set of necessary and sufficient conditions on the matrix set $\left\lbrace \textbf{P}, \textbf{Q}, \textbf{A}_{k}, \textbf{B}_{k}\right\rbrace$ such that a PDSTBC \textbf{X} with the above matrix set is a PDSSDC.
\begin{theorem}
\label{necc_suf_cond}
A PDSTBC $\textbf{X}$ is a PDSSDC if and only if the relay matrices $\textbf{A}_{k}$, $\textbf{B}_{k}$ satisfy the following conditions,\\

\noindent (i) For $1 \leq k \neq k' \leq K$,
{\small
\begin{eqnarray}
\label{cond_1}
\mathbf{\Upsilon}_{1}\textbf{A}_{k}\textbf{R}^{-1}\textbf{A}_{k'}^{H}\mathbf{\Upsilon}_{2}^{H} + \mathbf{\Pi}_{1}^{*}\textbf{A}_{k'}^{*}\textbf{R}^{-1}\textbf{A}_{k}^{T}\mathbf{\Pi}_{2}^{T} \in \mathcal{D}_{N} \mbox{ for } \left\{ \begin{array}{cccccccccc}
\mathbf{\Upsilon}_{1} = \mathbf{\Upsilon}_{2} = \textbf{P} \mbox{ and } \mathbf{\Pi}_{1} = \mathbf{\Pi}_{2} = \textbf{Q};\\
\mathbf{\Pi}_{2} = \mathbf{\Upsilon}_{1} = \textbf{P} \mbox{ and } \mathbf{\Pi}_{1} = \mathbf{\Upsilon}_{2} = \textbf{Q};\\
\mathbf{\Pi}_{1} = \mathbf{\Upsilon}_{2} = \textbf{P} \mbox{ and } \mathbf{\Pi}_{2} = \mathbf{\Upsilon}_{1} = \textbf{Q};
\end{array} 
\right.
\end{eqnarray}
\begin{eqnarray}
\label{cond_2}
\mathbf{\Upsilon}_{1}^{*}\textbf{B}_{k}\textbf{R}^{-1}\textbf{B}_{k'}^{H}\mathbf{\Upsilon}_{2}^{T} + \mathbf{\Pi}_{1}\textbf{B}_{k'}^{*}\textbf{R}^{-1}\textbf{B}_{k}^{T}\mathbf{\Pi}_{2}^{H} \in \mathcal{D}_{N}
\mbox{ for } \left\{ \begin{array}{cccccccccc}
\mathbf{\Upsilon}_{1} = \mathbf{\Upsilon}_{2} = \textbf{Q} \mbox{ and } \mathbf{\Pi}_{1} = \mathbf{\Pi}_{2} = \textbf{P};\\
\mathbf{\Pi}_{2} = \mathbf{\Upsilon}_{1} = \textbf{Q} \mbox{ and } \mathbf{\Pi}_{1} = \mathbf{\Upsilon}_{2} = \textbf{P};\\
\mathbf{\Pi}_{1} = \mathbf{\Upsilon}_{2} = \textbf{Q} \mbox{ and } \mathbf{\Pi}_{2} = \mathbf{\Upsilon}_{1} = \textbf{P}.
\end{array}
\right.
\end{eqnarray}
}
(ii) For $1 \leq k, k' \leq K$,
{\small
\begin{equation}
\label{cond_3}
\mathbf{\Pi}^{*}\left[ \textbf{B}_{k}\textbf{R}^{-1}\textbf{A}_{k'}^{H} + \textbf{A}_{k'}^{*}\textbf{R}^{-1}\textbf{B}_{k}^{T}\right]\mathbf{\Upsilon}^{H} \in \mathcal{D}_{N},
\mbox{for } \left\{ \begin{array}{cccccccccc}
\mathbf{\Upsilon} =  \textbf{P} \mbox{ and } \mathbf{\Pi} = \textbf{Q};\\
\mathbf{\Upsilon} =  \textbf{P} \mbox{ and } \mathbf{\Pi} = \textbf{P};\\
\mathbf{\Upsilon} =  \textbf{Q} \mbox{ and } \mathbf{\Pi} = \textbf{Q};
\end{array}
\right.
\end{equation}
\begin{equation}
\label{cond_4}
\mathbf{\Pi} \left[\textbf{A}_{k}\textbf{R}^{-1}\textbf{B}_{k'}^{H} + \textbf{B}_{k'}^{*}\textbf{R}^{-1}\textbf{A}_{k}^{T}\right]\mathbf{\Upsilon}^{T} \in \mathcal{D}_{N}, \mbox{for } \left\{ \begin{array}{cccccccccc}
\mathbf{\Upsilon} =  \textbf{Q} \mbox{ and }  \mathbf{\Pi} = \textbf{P};\\
\mathbf{\Upsilon} =  \textbf{P} \mbox{ and } \mathbf{\Pi} = \textbf{P};\\
\mathbf{\Upsilon} =  \textbf{Q} \mbox{ and } \mathbf{\Pi} = \textbf{Q}.
\end{array}
\right. 
\end{equation}
}
(iii) For $1 \leq k \leq K$,
{\small
\begin{equation}
\label{cond_5}
\textbf{A}_{k}\textbf{R}^{-1}\textbf{A}_{k}^{H} + \textbf{B}_{k}^{*}\textbf{R}^{-1}\textbf{B}_{k}^{T} = \mbox{diag}\left[D_{1,k}, D_{2,k}, \cdots, D_{N,k} \right].
\end{equation}
}
\indent where $D_{n,k} \in \mathbb{R}$ for all $n = 1, 2, \cdots N$.
\end{theorem}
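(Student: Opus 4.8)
The plan is to reduce single-symbol decodability to a statement purely about the Gram-type matrix $\textbf{M}:=\textbf{X}\textbf{R}^{-1}\textbf{X}^{H}\in\mathbb{C}^{K\times K}$ and then read off the conditions entry-by-entry with Lemma \ref{sept_cond}. By Definition \ref{D_pdssd}, $\textbf{X}$ is a PDSSDC exactly when $\textbf{M}=\sum_{i}\textbf{W}_{i}$, where each $\textbf{W}_{i}$ depends only on the $i$-th symbol and the diagonal has the prescribed form $|h_{k}|^{2}\bigl(\upsilon_{i,k}^{(1)}|x_{iI}|^{2}+\upsilon_{i,k}^{(2)}|x_{iQ}|^{2}\bigr)$; this is precisely what makes the quadratic part $\textbf{g}\textbf{M}\textbf{g}^{H}$ of the ML metric free of inter-symbol coupling, the linear part being automatically single-symbol separable since $\textbf{X}$ is linear in the real variables $x_{nI},x_{nQ}$. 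Entry-wise this means every off-diagonal $[\textbf{M}]_{k,k'}$ must be free of the cross-symbol products $x_{i}x_{i'},\,x_{i}x_{i'}^{*},\,x_{i}^{*}x_{i'}^{*}$ with $i\neq i'$, while each diagonal entry must carry only $|h_{k}|^{2}$ times a real single-symbol form.

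First I would expand $[\textbf{M}]_{k,k'}=\textbf{u}_{k}\textbf{R}^{-1}\textbf{u}_{k'}^{H}$ with $\textbf{u}_{k}=h_{k}\tilde{\textbf{s}}\textbf{A}_{k}+h_{k}^{*}\tilde{\textbf{s}}^{*}\textbf{B}_{k}$. This produces four bilinear forms carrying the four distinct channel monomials, namely $h_{k}h_{k'}^{*}\,\tilde{\textbf{s}}\textbf{A}_{k}\textbf{R}^{-1}\textbf{A}_{k'}^{H}\tilde{\textbf{s}}^{H}$, $h_{k}h_{k'}\,\tilde{\textbf{s}}\textbf{A}_{k}\textbf{R}^{-1}\textbf{B}_{k'}^{H}\tilde{\textbf{s}}^{T}$, $h_{k}^{*}h_{k'}^{*}\,\tilde{\textbf{s}}^{*}\textbf{B}_{k}\textbf{R}^{-1}\textbf{A}_{k'}^{H}\tilde{\textbf{s}}^{H}$ and $h_{k}^{*}h_{k'}\,\tilde{\textbf{s}}^{*}\textbf{B}_{k}\textbf{R}^{-1}\textbf{B}_{k'}^{H}\tilde{\textbf{s}}^{T}$. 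By the Remark the conditions must hold for every realisation of the $h_{k}$, so I treat the $h$'s as free; for $k\neq k'$ the four monomials are linearly independent and the requirement decouples, forcing each of the four forms to be single-symbol separable on its own.

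The core computation is to substitute the precoder $\tilde{\textbf{s}}=\textbf{s}\textbf{P}+\textbf{s}^{*}\textbf{Q}$ into each form. Using the identity $\textbf{s}^{*}\textbf{E}\textbf{s}^{T}=\textbf{s}\textbf{E}^{T}\textbf{s}^{H}$ to collapse the stray $\textbf{s}^{*}(\cdot)\textbf{s}^{T}$ piece, and $\textbf{R}=\textbf{R}^{H}$ to rewrite $(\textbf{R}^{-1})^{T}$, each form becomes a combination of the three canonical types $\textbf{s}\textbf{A}\textbf{s}^{H}$, $\textbf{s}\textbf{C}\textbf{s}^{T}$, $\textbf{s}^{*}\textbf{D}\textbf{s}^{H}$ of Lemma \ref{sept_cond}, whose three diagonality requirements ($\textbf{A}$, $\textbf{C}+\textbf{C}^{T}$, $\textbf{D}+\textbf{D}^{T}\in\mathcal{D}_{N}$) translate one-to-one into the three listed choices of $(\mathbf{\Upsilon}_{1},\mathbf{\Upsilon}_{2},\mathbf{\Pi}_{1},\mathbf{\Pi}_{2})$ in each display. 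Carrying this out term by term yields \eqref{cond_1} from the $\textbf{A}_{k}\textbf{A}_{k'}$ form, \eqref{cond_2} from the $\textbf{B}_{k}\textbf{B}_{k'}$ form, \eqref{cond_3} from the $\textbf{B}_{k}\textbf{A}_{k'}$ form and \eqref{cond_4} from the $\textbf{A}_{k}\textbf{B}_{k'}$ form. The latter two mixed forms reappear as the $h_{k}^{2}$ and $(h_{k}^{*})^{2}$ coefficients when $k=k'$, which is exactly why \eqref{cond_3} and \eqref{cond_4} are imposed for all $k,k'$ and not only for $k\neq k'$.

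Finally the diagonal $k=k'$ case produces \eqref{cond_5}: the $|h_{k}|^{2}$-coefficient equals $\tilde{\textbf{s}}\bigl(\textbf{A}_{k}\textbf{R}^{-1}\textbf{A}_{k}^{H}+\textbf{B}_{k}^{*}\textbf{R}^{-1}\textbf{B}_{k}^{T}\bigr)\tilde{\textbf{s}}^{H}$, and Definition \ref{D_pdssd} forces this to be real and single-symbol while simultaneously killing the $h_{k}^{2}$ and $(h_{k}^{*})^{2}$ pieces. Invoking the column-monomial, disjoint-support, $\{0,\pm1,\pm j\}$ structure of Lemma \ref{Relay_Mat_condition_1} together with the co-ordinate-interleaving action of $(\textbf{P},\textbf{Q})$ on $\tilde{\textbf{s}}$ — so that each $|\tilde{x}_{n}|^{2}$ splits into one squared in-phase and one squared quadrature coordinate — I expect this to collapse to the bare diagonal condition \eqref{cond_5} with real $D_{n,k}$. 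Sufficiency then follows by reading every equivalence backwards: once \eqref{cond_1}--\eqref{cond_5} hold, each channel-monomial coefficient of each $[\textbf{M}]_{k,k'}$ is single-symbol separable with the prescribed diagonal, so $\textbf{M}=\sum_{i}\textbf{W}_{i}$ has the form \eqref{PDSSD_condition}. The hard part will be precisely this diagonal reconciliation — verifying that the $h_{k}^{2}$ and $(h_{k}^{*})^{2}$ contributions genuinely vanish rather than merely diagonalise, and that bare diagonality of $\textbf{A}_{k}\textbf{R}^{-1}\textbf{A}_{k}^{H}+\textbf{B}_{k}^{*}\textbf{R}^{-1}\textbf{B}_{k}^{T}$ is the exact surviving requirement — together with the pervasive conjugate/transpose bookkeeping generated by substituting $\textbf{s}\textbf{P}+\textbf{s}^{*}\textbf{Q}$ and repeatedly using $\textbf{R}=\textbf{R}^{H}$.
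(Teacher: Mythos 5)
Your proposal is correct and is essentially the paper's own argument: the paper defers the details to Lemma 3 of \cite{HaR}, but the intended mechanism is exactly what you describe --- expand each entry of $\textbf{X}\textbf{R}^{-1}\textbf{X}^{H}$ into the four bilinear forms attached to the channel monomials $h_{k}h_{k'}^{*}$, $h_{k}h_{k'}$, $h_{k}^{*}h_{k'}^{*}$, $h_{k}^{*}h_{k'}$, decouple them using linear independence of these monomials over all channel realisations, substitute $\tilde{\textbf{s}}=\textbf{s}\textbf{P}+\textbf{s}^{*}\textbf{Q}$, and apply the three diagonality requirements of Lemma \ref{sept_cond} coefficient-by-coefficient, with the case $k=k'$ yielding \eqref{cond_5}. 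Your identification of the three $(\mathbf{\Upsilon},\mathbf{\Pi})$ assignments in \eqref{cond_1}--\eqref{cond_4} with the conditions $\textbf{A},\,\textbf{C}+\textbf{C}^{T},\,\textbf{D}+\textbf{D}^{T}\in\mathcal{D}_{N}$ of Lemma \ref{sept_cond} is precisely how the stated conditions arise.
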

%%%%%%%%%%%%%%%%%%%%%%BEGIN PROOF%%%%%%%%%%%%%%%%%%%%%%%
\begin{proof}
Refer to the proof of Lemma $3$ in \cite{HaR}.
\end{proof}

%%%%%%%%%%%%%COMMENTs comparing PDSSD and DOSTBC%%%%%%%%%%%% %%%%%%%%%%%%
%%%%%%%%%%%%%%%%%%%%%%%%%%%%%%%%%%%%%%%%%%%%%%%%%%%%%%%%%
\indent Theorem \ref{necc_suf_cond} provides a set of necessary and sufficient conditions on the relay matrices $\textbf{A}_{k}, \textbf{B}_{k}$ and the precoding matrices $\textbf{P}$ and $\textbf{Q}$ such that, $\textbf{X}$ is a PDSSDC. The matrices, $\mathbf{\Upsilon}_{1}\textbf{A}_{k}\textbf{R}^{-1}\textbf{A}_{k'}^{H}\mathbf{\Upsilon}_{2}^{H} + \mathbf{\Pi}_{1}^{*}\textbf{A}_{k'}^{*}\textbf{R}^{-1}\textbf{A}_{k}^{T}\mathbf{\Pi}_{2}^{T},~$ $\mathbf{\Upsilon}_{1}^{*}\textbf{B}_{k}\textbf{R}^{-1}\textbf{B}_{k'}^{H}\mathbf{\Upsilon}_{2}^{T} + \mathbf{\Pi}_{1}\textbf{B}_{k'}^{*}\textbf{R}^{-1}\textbf{B}_{k}^{T}\mathbf{\Pi}_{2}^{H},~$  $\mathbf{\Pi}^{*}\left[ \textbf{B}_{k}\textbf{R}^{-1}\textbf{A}_{k'}^{H} + \textbf{A}_{k'}^{*}\textbf{R}^{-1}\textbf{B}_{k}^{T}\right]\mathbf{\Upsilon}^{H} \mbox{ and } \mathbf{\Pi} \left[\textbf{A}_{k}\textbf{R}^{-1}\textbf{B}_{k'}^{H} + \textbf{B}_{k'}^{*}\textbf{R}^{-1}\textbf{A}_{k}^{T}\right]\mathbf{\Upsilon}^{T}$ in the conditions of \eqref{cond_1} - \eqref{cond_4} need to be diagonal. This implies that the above matrices can also be $\textbf{0}_{N}$. The DOSTBCs studied in \cite{ZhK} are a special class of PDSSDCs since the relay matrices of DOSTBCs $\left( \mbox{Lemma}~1, \cite{ZhK} \right)$ satisfy the conditions of Theorem \ref{necc_suf_cond}. In particular, the necessary and sufficient conditions on the relay matrices of DOSTBCs as shown in $\mbox{Lemma}~1$ of $\cite{ZhK}$ can be obtained from the necessary and sufficient conditions of PDSSDCs by making $\textbf{P} = \textbf{I}_{N}$, $\textbf{Q} = \textbf{0}_{N}$ and $\mathcal{D}_{N} = \textbf{0}_{N}$ in \eqref{cond_1} - \eqref{cond_4}.\\
%%%%%%%%%%%DEFINE UPDSSD%%%%%%%%%%%%%%%%%%%%%%%%%%%%%%%%%
\indent A PDSSDC, $\textbf{X}$ in variables $x_{1}, x_{2}, \cdots x_{N}$ can be written in the form of a linear dispersion code \cite{HaH} as $\textbf{X} = \sum_{j = 1}^{N} x_{iI} \Phi_{iI} + x_{iQ} \Phi_{iQ}$ where $\Phi_{iI}, \Phi_{iQ} \in \mathbb{C}^{K \times T}$ are called the weight matrices of $\textbf{X}$. Within the class of PDSSDCs, we consider a special set of codes called Unitary PDSSDCs defined as,
\begin{definition}
\label{u_p_dssd}
A PDSSDC, $\textbf{X}$ is called a Unitary PDSSDC, if the weight matrices of $\textbf{X}$ satisfies the following conditions, $\Phi_{iI}\Phi_{iI}^{H},~\Phi_{iQ}\Phi_{iQ}^{H} \in \mathcal{D}_{K}^{\mathbf{+}}$ for all $i = 1, 2, \cdots N.$
\end{definition}
\begin{remark}
We caution the reader to note the difference between the definition for a Unitary PDSSDC for cooperative networks and  the definition for a Unitary SSD code for MIMO systems \cite{SaR}. For better clarity, we recall the definition for a Unitary SSD code designed for MIMO systems.
A SSD STBC, $\tilde{\textbf{X}}$ in variables $x_{1}, x_{2}, \cdots x_{N}$ when written in the form of a linear dispersion code is given by $\tilde{\textbf{X}} = \sum_{j = 1}^{N} x_{iI} \tilde{\Phi}_{iI} + x_{iQ} \tilde{\Phi}_{iQ}$,
where $\tilde{\Phi}_{iI}, \tilde{\Phi}_{iQ} \in \mathbb{C}^{K \times T}$ are called the weight matrices of $\tilde{\textbf{X}}$. The design $\tilde{\textbf{X}}$ is said to be a unitary SSD if $\tilde{\Phi}_{iI}\tilde{\Phi}_{iI}^{H} = \textbf{I}_{K}$ for all $i = 1, 2, \cdots N$. The difference between the two definitions can be observed as the definition for a unitary SSD STBC is a special case of the definition for a unitary PDSSDC.
\end{remark}

\indent It can be verified that DOSTBCs belong to the class of Unitary PDSSDCs. In the rest of the paper, we consider only unitary PDSSDCs. However, it is to be noted that the class of non-unitary PDSSDCs is not empty. A class of low decoding complexity DSTBCs called Precoded Coordinate Interleaved Orthogonal Design (PCIOD) has been introduced in \cite{RaR1} wherein the authors have proposed a design, $\textbf{X}_{PCIOD}$ for a network with 4 relays which is SSD (Example 1 of \cite{RaR1}). It can be observed that the proposed code $\textbf{X}_{PCIOD}$ given in \eqref{pciod} belongs to the class of non-unitary PDSSDCs. Since we consider only unitary PDSSDCs, through out the paper a PDSSDC is meant unitary PDSSDC.
{\small
\begin{equation}
\label{pciod}
\textbf{X}_{PCIOD} = \left[\begin{array}{rrrr}
h_{1}\tilde{x}_{1} & h_{1}\tilde{x}_{2} & 0 & 0\\
- h_{2}^{*}\tilde{x}_{2}^{*} & h_{2}^{*}\tilde{x}_{1}^{*} & 0 & 0\\
0 & 0 & h_{3}\tilde{x}_{3} & h_{3}\tilde{x}_{4}\\
0 & 0 & - h_{4}^{*}\tilde{x}_{4}^{*} & h_{4}^{*}\tilde{x}_{3}^{*}\\
\end{array}\right].
\end{equation}
}
The precoding matrices, $\textbf{P}$ and $\textbf{Q}$ required at the source to construct $\textbf{X}_{PCIOD}$ are 
{\small
\begin{equation*}
\textbf{P} = \frac{1}{2}\left[\begin{array}{rrrr}
1 & 0 & 1 & 0\\
0 & 1 & 0 & 1\\
1 & 0 & 1 & 0\\
0 & 1 & 0 & 1\\
\end{array}\right] ;~ \textbf{Q} = \frac{1}{2}\left[\begin{array}{rrrr}
1 & 0 & -1& 0\\
0 & 1 & 0 & -1\\
-1 & 0 & 1& 0\\
0 & -1 & 0 & 1\\
\end{array}\right].
\end{equation*}
}
\indent Various class of single-symbol decodable STBCs for cooperative networks are captured in Figure \ref{various_DSSD} which is first partitioned in to two sets depending on whether the codes are unitary or non-unitary (Definition \ref{u_p_dssd}). The class of PDSSDCs are shown to be a subset of the class of SSD codes for cooperative networks. The set of unitary distributed SSD codes are shown to contain the DOSTBCs and the S-PDSSDCs (Definition \ref{D_so_updssd}). An example of a code which belongs to the class of non-unitary Distributed SSD codes but not to the class of PDSSDCs is given below,
\begin{equation}
\textbf{X}_{DSSDC} = \left[\begin{array}{cc}
\mbox{Re} (h_{1}x_{1}) + j \mbox{Im} (h_{1}x_{2}) & - \mbox{Re} (h_{1}x_{2}) + j \mbox{Im} (h_{1}x_{1})\\
\mbox{Re} (h_{2}x_{2}) + j \mbox{Im} (h_{2}x_{1}) & \mbox{Re} (h_{2}x_{1}) - j \mbox{Im} (h_{2}x_{2})\\
\end{array}\right].
\end{equation}
{\small
\begin{equation*}
\textbf{A}_{1} = \frac{1}{2}\left[\begin{array}{rrrr}
1 & 1\\
1  & -1\\
\end{array}\right] ;~ \textbf{B}_{1} = \frac{1}{2}\left[\begin{array}{rrrr}
1 & -1\\
-1  & -1\\
\end{array}\right]; ~ \textbf{A}_{2} = \frac{1}{2}\left[\begin{array}{rrrr}
1 & 1\\
1  & -1\\
\end{array}\right]~\mbox{ and }~\textbf{B}_{2} = \frac{1}{2}\left[\begin{array}{rrrr}
-1 & 1\\
 1 & 1\\
\end{array}\right].
\end{equation*}
}

\noindent From the above matrices, it can be verified that, $\textbf{R}^{-1}$ is a scaled identity matrix. Therefore, $\textbf{X}\textbf{R}^{-1}\textbf{X}^{H} = \textbf{R}^{-1}\textbf{X}\textbf{X}^{H}$ and $\textbf{X}\textbf{X}^{H}$ is given by,
\begin{equation*}
\textbf{X}\textbf{X}^{H} = \left[\begin{array}{cccc}
|h_{1}|^{2}\sum_{i = 1}^{2} |x_{i}|^{2} & \left(h_{1}^{*}h_{2} - h_{2}^{*}h_{1}\right) \sum_{i = 1}^{2} |x_{i}|^{2}\\
\left(h_{2}^{*}h_{1} - h_{1}^{*}h_{2}\right) \sum_{i = 1}^{2} |x_{i}|^{2}  & |h_{2}|^{2}\sum_{i = 1}^{2} |x_{i}|^{2}\\
\end{array}\right].
\end{equation*}
%%%%%%%%%%%%%%%%%%ROW MONOMIAL UPDSSD%%%%%%%%%%%%%%%%%%%%%%%%
%%%%%%%%%%%%%%%%%%%%%%%%%%%%%%%%%%%%%%%%%%%%%%%%%%%%%%%
\section{Semi-orthogonal PDSSDC}\label{sec3}
\indent From the definition of a PDSSDC (Definition \ref{D_pdssd}), $\left[\textbf{X}\textbf{R}^{-1}\textbf{X}^{H}\right]_{k, k'}$ for any $k \neq k'$ can be non-zero. i.e, the $k^{th}$ and the $k'^{th}$ row of a PDSSDC $\textbf{X}$, need not satisfy the equality $\left[\textbf{X}\textbf{R}^{-1}\textbf{X}^{H}\right]_{k, k'} = 0$, but $\left[\textbf{X}\textbf{R}^{-1}\textbf{X}^{H}\right]_{k, k'}$ must be a complex linear combination of several terms with each term being a function of in-phase and quadrature component of a single information variable. Through out the paper, the $k^{th}$ and the $k'^{th}$ row of a PDSSDC are referred to as \textbf{R}-orthogonal if $\left[\textbf{X}\textbf{R}^{-1}\textbf{X}^{H}\right]_{k, k'} = 0$. Similarly, the $k^{th}$ and the $k'^{th}$ row are referred to as \textbf{R}-non-orthogonal if $\left[\textbf{X}\textbf{R}^{-1}\textbf{X}^{H}\right]_{k, k'} \neq 0$. In this paper, we identify a special class of PDSSDCs where every row of $\textbf{X}$ is \textbf{R}-non-orthogonal to atmost one of its rows and we formally define it as,
\begin{definition}\label{D_so_updssd}
A PDSSDC is said to be a Semi-orthogonal PDSSDC (S-PDSSDC) if every row of a PDSSDC is \textbf{R}-non-orthogonal to atmost one of its rows.
\end{definition}
\indent From the above definition, it can be observed that DOSTBCs are a proper subclass of S-PDSSDCs since every row of DOSTBC is \textbf{R}-orthogonal to every other row. The definition of a S-PDSSDC implies that the set of $K$ rows can be partitioned in to atleast $\lceil{\frac{K}{2}}\rceil$ groups such that every group has atmost two rows.\\
\indent The co-variance matrix, $\textbf{R}$ in \eqref{covariance_matrix} is a function of (i) the realisation of the channels from the relays to the destination and (ii) the relay matrices, $\textbf{A}_{k}, \textbf{B}_{k}$. In general, $\textbf{R}$ may not be diagonal in which case the construction of S-PDSSDCs is not straight forward. On the similar lines of \cite{ZhK}, we consider a subset of S-PDSSDCs whose covariance matrix is diagonal and refer to such a subset as row monomial S-PDSSDCs (RS-PDSSDCs). It can be proved that the relay matrices of a RS-PDSSDC are row monomial if and only if the corresponding covariance matrix is diagonal (refer to Theorem 1 of \cite{ZhK}). The row monomial property of the relay matrices implies that every row of a RS-PDSSDC contains the variables $\pm~ h_{k}\tilde{x}_{n}$ and $\pm~ h_{k}^{*}\tilde{x}^{*}_{n}$ atmost once for all $n$ such that $1 \leq n \leq N$.
%%%%%%%%%%%%%%%%%%%%%%%%%%%%%%%%%%%%%%%%%%%%%%%%%%%%%%%%
%%%%%%%%%%%%%%%%%%%%%%%%%%%%%%%%%%%%%%%%%%%%%%%%%%%%%%%%%
\subsection{upperbound on the symbol-rate of RS-PDSSDCs}

\indent In this subsection, we derive an upperbound on the rate of RS-PDSSDCs in symbols per channel use in the second phase i.e an upperbound on $\frac{N}{T}$. Towards that end, properties of the relay matrices $\textbf{A}_{k}$, $\textbf{A}_{k'}$, $\textbf{B}_{k}$ and $\textbf{B}_{k'}$ of RS-PDSSDC are studied when the rows corresponding to the indices $k$ and $k'$ are (i) \textbf{R}-orthogonal and (ii) \textbf{R}-non-orthogonal. For the former case, the properties of $\textbf{A}_{k}$, $\textbf{A}_{k'}$, $\textbf{B}_{k}$ and $\textbf{B}_{k'}$ have been studied in \cite{ZhK}. If $k$ and $k'$ represent the indices of the rows of a RS-PDSSDC that are \textbf{R}-orthogonal, then the corresponding relay matrices $\textbf{A}_{k}, \textbf{A}_{k'}, \textbf{B}_{k}$ and $\textbf{B}_{k'}$ satisfies the following conditions (i) $\textbf{A}_{k}$ and $\textbf{A}_{k'}$ are column disjoint and (ii) $\textbf{B}_{k}$ and $\textbf{B}_{k'}$ are column disjoint. (Lemma 3 of \cite{ZhK}) i.e., the matrices $\textbf{A}_{k}$ and $\textbf{A}_{k'}$ cannot contain non-zero entries on the same columns simultaneously. The above result implies,
\begin{equation}
\label{orth_cond_As}
\textbf{A}_{k}\textbf{A}_{k'}^{H} = \textbf{0}_{N} ~~\mbox{ and }~~ \textbf{B}_{k}^{*}\textbf{B}_{k'}^{T} = \textbf{0}_{N}.
\end{equation}
In order to address the latter case, consider the 2 $\times$ 2 matrix $\mathbf{\Xi}$ as given below, 
\begin{equation*}
%\label{matrix_Xi}
\mathbf{\Xi} = \left[\begin{array}{cc}
h_{k}\tilde{x}_{i}  &  h_{k'}\tilde{x}_{m}\\
h_{k}\blacklozenge  & h_{k'}\clubsuit \\
\end{array}\right]
\end{equation*}
where $h_{k}, h_{k'}$ are complex random variables. The complex variables $\tilde{x}_{i}$ and $\tilde{x}_{m}$ are the components of the transmitted vector $\tilde{s}$ (as in \eqref{cord_int}) where $\tilde{\textbf{s}} = \left[\tilde{x}_{1}~\tilde{x}_{2}~\cdots~\tilde{x}_{N} \right].$ In particular, the complex variables $\tilde{x}_{i} \mbox{ and }\tilde{x}_{m}$ are of the form, $\tilde{x}_{i} = \pm~ x_{\gamma \square}\pm~ j x_{\lambda \square} ~ \mbox{ and }~\tilde{x}_{m} = \pm~ x_{\delta \square} \pm~ j x_{\mu \square}$ where 
\begin{itemize}
\item $\gamma, \lambda, \delta$ and $\mu$ are positive integers such that $ 1 \leq \gamma, \lambda, \delta$, $\mu \leq N$ and atmost any two of these integers can be equal. 
\item The subscript $\square$ denotes either $I$ (in-phase component) or $Q$ (quadrature component) of a variable and \item $\blacklozenge$, $\clubsuit$ are indeterminate complex variables which can take values of the form $\pm~ \tilde{x}_{n}$ or $\pm~\tilde{x}_{n}^{*}$ such that $1 \leq n \leq N$. 
\end{itemize}
For example, if $N$ = 4, $\tilde{x}_{i}$ and $\tilde{x}_{m}$ can possibly be $ x_{2 I} + j x_{3 Q}$ and $ x_{3 I} + j x_{4 Q}$ respectively.\\
\indent In Lemma \ref{SSD_Cols}, we investigate various choices on the indeterminate variables $\blacklozenge$ and $\clubsuit$ such that $\left[\mathbf{\Xi}^{H}\mathbf{\Xi}\right]_{1,2}$ is a complex linear combination of several terms with each term being a function of in-phase and quadrature components of a single information variable. In general, the real variables $x_{\gamma \square}, x_{\lambda \square}, x_{\delta \square}$ and $x_{\mu \square}$ can appear in $ \tilde{x}_{i} $ and $ \tilde{x}_{m} $ with arbitrary signs. With out loss of generality, we assume that $ \tilde{x}_{i} $ and $ \tilde{x}_{m} $ are given by
\begin{eqnarray}
\label{variables}
\tilde{x}_{i} = x_{\gamma \square} + j x_{\lambda \square} ~ \mbox{ and } ~~ \tilde{x}_{m} = x_{\delta \square} + j x_{\mu \square}.
\end{eqnarray}
However, the results of Lemma \ref{SSD_Cols} will continue to hold even if the variables $x_{\gamma \square}, x_{\lambda \square}, x_{\delta \square}$ and $x_{\mu \square}$ appear in $ \tilde{x}_{i} $ and $ \tilde{x}_{m} $ with any arbitrary signs. Since a RS-PDSSDC takes variables only of the form $\pm~ h_{k}\tilde{x}_{n}$, $\pm~ h_{k}^{*}\tilde{x}^{*}_{n}$ and every row of a RS-PDSSDC contains the variables $\pm~ h_{k}\tilde{x}_{n}$ and $\pm~ h_{k}^{*}\tilde{x}^{*}_{n}$ atmost once, we have the following restrictions on the choice of the indeterminate variables $\blacklozenge$ and $\clubsuit$ that (i) the indeterminate $\blacklozenge$ cannot take the variable $ \tilde{x}_{i} $ and variables of the form $ \tilde{x}_{n}^{*} $ for all $n = 1, 2, \cdots N$ and (ii) the indeterminate $\clubsuit$ cannot take the variable $ \tilde{x}_{m} $ and variables of the form $ \tilde{x}_{n}^{*} $ for all $n = 1, 2, \cdots N$. 
%It can be observed that, if $\blacklozenge$ and $\clubsuit$ takes values specified in (i) and (ii), then 
%%%%%%%%%%%%%%%%%%%%%%%%%%%%%%%%%%%%%%%%%%%%%%%%%%%%%%%
\begin{lemma}
\label{SSD_Cols}
If there exists a solution on the choice of $\blacklozenge$ and $\clubsuit$ such that
{\small
\begin{eqnarray}
\label{SSD_Cols_cond}
\left[\mathbf{\Xi}^{H}\mathbf{\Xi}\right]_{1,2} = f_{1}\left(x_{\delta I}, x_{\delta Q}, h_{k}, h_{k'}\right) + f_{2}\left(x_{\gamma I}, x_{\gamma Q}, h_{k}, h_{k'}\right) \nonumber\\
~~~~~~~~~~~~~~~~~ + f_{3}\left(x_{\lambda I}, x_{\lambda Q}, h_{k}, h_{k'}\right) + f_{4}\left(x_{\mu I}, x_{\mu Q}, h_{k}, h_{k'}\right) \neq 0, 
\end{eqnarray}
}
\noindent then only one of the following is true,\\
(i) $\delta = \gamma \mbox{ and }  \mu =  \lambda$.\\ 
(ii) $\delta =\lambda \mbox{ and } \mu =  \gamma$.\\
where $f_{i}\left(x_{\beta I}, x_{\beta Q}, h_{k}, h_{k'}\right)$ is a complex valued function of the variables, $x_{\beta I}, x_{\beta Q}, h_{k}, h_{k'}$ for all $i = 1,2, \cdots 4$ and $\beta = \mu, \lambda, \gamma, \delta.$
\end{lemma}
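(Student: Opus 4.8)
The plan is to strip the matrix condition down to a purely algebraic statement about $\tilde{x}_{i}^{*}\tilde{x}_{m}+\blacklozenge^{*}\clubsuit$ and then to exhaust the coincidences among the indices $\gamma,\lambda,\delta,\mu$, using the hypothesis that no three of them are equal. First I would evaluate the single relevant Gram entry. Writing out the product of the two columns gives
\begin{equation*}
\left[\mathbf{\Xi}^{H}\mathbf{\Xi}\right]_{1,2} = h_{k}^{*}h_{k'}\left(\tilde{x}_{i}^{*}\tilde{x}_{m} + \blacklozenge^{*}\clubsuit\right),
\end{equation*}
so, after cancelling the common factor $h_{k}^{*}h_{k'}$, the condition \eqref{SSD_Cols_cond} is equivalent to requiring that $G := \tilde{x}_{i}^{*}\tilde{x}_{m}+\blacklozenge^{*}\clubsuit$ be a \emph{non-zero} sum of functions each depending only on the in-phase and quadrature components of a single information variable. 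Using \eqref{variables} I would expand $\tilde{x}_{i}^{*}\tilde{x}_{m}$ into four monomials whose variable-index pairs are $(\gamma,\delta),(\lambda,\mu)$ in the real part and $(\gamma,\mu),(\lambda,\delta)$ in the imaginary part. Since the restrictions stated just before the lemma forbid $\blacklozenge,\clubsuit$ from being conjugated and from taking the values $\tilde{x}_{i},\tilde{x}_{m}$, I may write $\blacklozenge=\pm\tilde{x}_{n}$ and $\clubsuit=\pm\tilde{x}_{n'}$ with $\tilde{x}_{n}=x_{a\square}+jx_{b\square}$ and $\tilde{x}_{n'}=x_{c\square}+jx_{d\square}$; hence $\blacklozenge^{*}\clubsuit$ has the same four-monomial shape, with index pairs $(a,c),(b,d)$ (real) and $(a,d),(b,c)$ (imaginary).

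The engine of the proof is the elementary observation that distinct monomials in the real variables $x_{\beta\square}$ are linearly independent, and that a monomial appearing with a real coefficient (in the real part of $G$) can never cancel the same monomial appearing with coefficient $j$ (in the imaginary part). Consequently every \emph{cross} monomial of $G$ — one whose two variable indices differ — must vanish, which means each cross monomial of $\tilde{x}_{i}^{*}\tilde{x}_{m}$ has to be annihilated by a cross monomial of $\blacklozenge^{*}\clubsuit$ of the same real/imaginary type, while for $G\neq 0$ at least one \emph{self} monomial (equal indices) must survive. Inspecting the four index pairs $(\gamma,\delta),(\lambda,\mu),(\gamma,\mu),(\lambda,\delta)$ under the hypothesis that no three of $\gamma,\lambda,\delta,\mu$ coincide, I would note that $\tilde{x}_{i}^{*}\tilde{x}_{m}$ carries zero, one, or two self monomials, and that it carries two precisely in the configurations $\{\gamma=\delta,\ \lambda=\mu\}$ and $\{\gamma=\mu,\ \lambda=\delta\}$, namely (i) and (ii); any other coincident pair shares an index and would force three indices to be equal. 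The proof is then organized by this count.

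It remains to eliminate the counts zero and one. If $\tilde{x}_{i}^{*}\tilde{x}_{m}$ has no self monomial — which happens when $\gamma,\lambda,\delta,\mu$ are all distinct or the only coincidence is $\gamma=\lambda$ or $\delta=\mu$ — then all four of its monomials are cross terms, and cancelling them accounts for all four monomials of $\blacklozenge^{*}\clubsuit$; hence nothing survives and $G=0$, contradicting $G\neq 0$. The single-self case, which I expect to be the main obstacle, is a diagonal coincidence such as $\gamma=\delta$, leaving three cross monomials and one self monomial: its two imaginary cross monomials $(\gamma,\mu)$ and $(\lambda,\gamma)$ must be matched by the imaginary pair $(a,d),(b,c)$ of $\blacklozenge^{*}\clubsuit$, and there are exactly two matchings. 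I would check that one of them forces $\blacklozenge$ to take the value $\tilde{x}_{i}$, which the restrictions prohibit, while the other leaves an uncancelled $(\gamma,\mu)$ term in the imaginary part, so $G$ fails to be single-variable. Thus it is precisely the forbidden-value restriction on $\blacklozenge$ and $\clubsuit$ that closes this case, and the symmetry among the four diagonal coincidences disposes of the rest. Finally, (i) and (ii) cannot hold simultaneously, since jointly they force $\gamma=\lambda=\delta=\mu$ in violation of the no-three-equal hypothesis; therefore exactly one of them is true, as asserted.
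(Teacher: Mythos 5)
Your overall strategy is sound and is essentially the canonical one (note that the paper itself defers this proof to the technical report \cite{HaR}, so the comparison is with the natural argument): factor out the common $h_{k}^{*}h_{k'}$, expand $G=\tilde{x}_{i}^{*}\tilde{x}_{m}+\blacklozenge^{*}\clubsuit$ into monomials in the real components, observe that a real-coefficient monomial can never cancel a $j$-coefficient monomial, demand that every cross monomial vanish while some self monomial survives, and enumerate the index coincidences permitted by the no-three-equal hypothesis. The zero-self and two-self cases are handled correctly, with one small slip in bookkeeping: your list of zero-self configurations should also include the simultaneous double coincidence $\gamma=\lambda$ \emph{and} $\delta=\mu$, which the same four-crosses-force-$G=0$ argument disposes of.

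However, one concrete step in the single-self case is wrong as stated. Take $\gamma=\delta$ with the other indices distinct. In the second matching, $(a,d)\leftrightarrow(\lambda,\gamma)$ and $(b,c)\leftrightarrow(\gamma,\mu)$, the sign constraints force the overall sign of $\blacklozenge^{*}\clubsuit$ to be $+1$, and then \emph{both} imaginary cross terms do cancel, by construction of the matching; there is no surviving imaginary $(\gamma,\mu)$ term. What actually survives is in the real part: with $\blacklozenge=\pm\left(x_{\lambda\square}+jx_{\gamma\square}\right)$ and $\clubsuit=\pm\left(x_{\mu\square}+jx_{\gamma\square}\right)$, the real part of $\blacklozenge^{*}\clubsuit$ equals $x_{\lambda\square}x_{\mu\square}+x_{\gamma\square}x_{\gamma\square}$, which coincides exactly with the real part of $\tilde{x}_{i}^{*}\tilde{x}_{m}$, so that $G=2x_{\gamma\square}x_{\gamma\square}+2x_{\lambda\square}x_{\mu\square}$ and the contradiction is the \emph{doubled} real cross monomial $(\lambda,\mu)$. (In the first matching, note also that it forces $\clubsuit=\pm\tilde{x}_{m}$ as well as $\blacklozenge=\pm\tilde{x}_{i}$, both forbidden, and that it would give $\blacklozenge^{*}\clubsuit=-\tilde{x}_{i}^{*}\tilde{x}_{m}$, i.e.\ $G=0$, violating the non-vanishing hypothesis anyway.) The fix is local and your framework absorbs it immediately, but the step as you wrote it would fail if executed literally, so the justification for rejecting the second matching needs to be corrected from the imaginary part to the real part.
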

\begin{proof}
Refer to the proof of Lemma $4$ in \cite{HaR}.
\end{proof}
%%%%%%%%%%%%%%%%%%%%%%%%%%%%%%%%%%%%%%%%%%%%%%%%%%%%%%%%
Similarly, it can be shown that, the results of Lemma \ref{SSD_Cols} holds true even if the matrix $\Xi$ is of the form,
\begin{equation*}
\left[\begin{array}{cc}
h_{k}^{*}\tilde{x}_{i}^{*}  &  h_{k'}^{*}\tilde{x}_{m}^{*}\\
h_{k}^{*}\blacklozenge  & h_{k'}^{*}\clubsuit \\
\end{array}\right].
\end{equation*}
We use the results of Lemma \ref{SSD_Cols} to study the properties of the relay matrices of a RS-PDSSDC.
%%%%%%%%%%%%%%%%%%%%%%%%%%%%%%%%%%%%%%%%%%%%%%%%%%%%%%%%%%
\begin{lemma}
\label{SSD_Cols_PDSSD}
Let $\textbf{A}_{k}$ and $\textbf{A}_{k'}$ be the relay matrices of a RS-PDSSDC, $\textbf{X}$. If $\left[ \textbf{A}_{k}\textbf{A}_{k'}^{H}\right]_{i,m}$ is a non zero entry for $i \neq m$, then the precoding matrices at the source $\textbf{P}$ and $\textbf{Q}$ are such that
\begin{equation}
\label{SSD_Cols_PDSSD_cond}
\tilde{x}_{iI}, \tilde{x}_{iQ}, \tilde{x}_{mI} \mbox{ and }\tilde{x}_{mQ} \in \left\lbrace {x}_{nI}, {x}_{nQ}, {x}_{n'I}, {x}_{n'Q} \right\rbrace \mbox{ with }
\end{equation}
\begin{equation*}
\tilde{x}_{iI}, \tilde{x}_{iQ} \in \left\lbrace {x}_{n \square}, {x}_{n' \square}\right\rbrace ~ \mbox{ and }~\tilde{x}_{mI}, \tilde{x}_{mQ} \in \left\lbrace {x}_{n \square}, {x}_{n' \square}\right\rbrace
\end{equation*}
for some $n \neq n'$ where $1 \leq n,n' \leq N$ and the subscript $\square$ represents either $I$ or $Q$.
\end{lemma}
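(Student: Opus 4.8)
The plan is to reduce the statement to a direct application of Lemma \ref{SSD_Cols} by first using the monomial structure of the relay matrices to locate, inside the codeword $\textbf{X}$, a $2\times 2$ sub-array of exactly the form $\mathbf{\Xi}$. First I would unwind the hypothesis $[\textbf{A}_{k}\textbf{A}_{k'}^{H}]_{i,m}\neq 0$. Writing $[\textbf{A}_{k}\textbf{A}_{k'}^{H}]_{i,m} = \sum_{t=1}^{T}[\textbf{A}_{k}]_{i,t}\overline{[\textbf{A}_{k'}]_{m,t}}$, this sum can be nonzero only if some column index $t^{\ast}$ satisfies $[\textbf{A}_{k}]_{i,t^{\ast}}\neq 0$ and $[\textbf{A}_{k'}]_{m,t^{\ast}}\neq 0$ simultaneously. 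Because $\textbf{X}$ is a RS-PDSSDC, each $\textbf{A}_{k}$ is both column monomial (Lemma \ref{Relay_Mat_condition_1}) and row monomial, so row $i$ of $\textbf{A}_{k}$ has its unique nonzero in column $t^{\ast}$ and, by column monomiality, $i$ is the only variable index contributing to column $t^{\ast}$ of $h_{k}\tilde{\textbf{s}}\textbf{A}_{k}$. Hence the $(k,t^{\ast})$ entry of $\textbf{X}$ equals $h_{k}\tilde{x}_{i}$ up to a factor in $\{\pm 1,\pm j\}$, and likewise the $(k',t^{\ast})$ entry equals $h_{k'}\tilde{x}_{m}$ up to such a factor. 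These two entries supply the top row of $\mathbf{\Xi}$; the remaining row $(h_{k}\blacklozenge,\,h_{k'}\clubsuit)$ is read off from a second relevant column, and the ``at most once'' row-monomial property together with the fact that $\textbf{A}_{k}$ and $\textbf{B}_{k}$ never share positions (Lemma \ref{Relay_Mat_condition_1}) guarantees that $\blacklozenge$ and $\clubsuit$ obey the restrictions imposed just before Lemma \ref{SSD_Cols}.

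Next I would invoke single-symbol decodability. Since $\textbf{R}^{-1}$ is diagonal with positive diagonal entries (the row-monomial, diagonal-covariance hypothesis) and $\textbf{X}$ is a PDSSDC, the cross term $[\textbf{X}\textbf{R}^{-1}\textbf{X}^{H}]_{k,k'}$ must be a complex linear combination in which each summand depends only on the in-phase and quadrature parts of a single information variable (Definition \ref{D_pdssd}). Restricting this cross term to the two columns identified above reproduces $[\mathbf{\Xi}^{H}\mathbf{\Xi}]_{1,2}$ up to the positive diagonal weights of $\textbf{R}^{-1}$, which do not change which variables occur. The hypothesis $[\textbf{A}_{k}\textbf{A}_{k'}^{H}]_{i,m}\neq 0$ is precisely the statement that the $\overline{\tilde{x}_{i}}\,\tilde{x}_{m}$ contribution survives, so $[\mathbf{\Xi}^{H}\mathbf{\Xi}]_{1,2}\neq 0$ and condition \eqref{SSD_Cols_cond} is met. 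The conjugate-form variant of $\mathbf{\Xi}$ noted after Lemma \ref{SSD_Cols} covers the case in which the shared column instead carries $h_{k}^{*}\tilde{x}_{i}^{*}$ and $h_{k'}^{*}\tilde{x}_{m}^{*}$.

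Finally I would apply Lemma \ref{SSD_Cols}. Writing $\tilde{x}_{i}=x_{\gamma\square}+jx_{\lambda\square}$ and $\tilde{x}_{m}=x_{\delta\square}+jx_{\mu\square}$, the lemma forces either $\delta=\gamma,\ \mu=\lambda$ or $\delta=\lambda,\ \mu=\gamma$. Setting $n=\gamma$ and $n'=\lambda$ (distinct for a genuinely coordinate-interleaved precoding), in either case the four real quantities $\tilde{x}_{iI},\tilde{x}_{iQ},\tilde{x}_{mI},\tilde{x}_{mQ}$ are drawn from $\{x_{nI},x_{nQ},x_{n'I},x_{n'Q}\}$, with $\tilde{x}_{iI},\tilde{x}_{iQ}\in\{x_{n\square},x_{n'\square}\}$ and $\tilde{x}_{mI},\tilde{x}_{mQ}\in\{x_{n\square},x_{n'\square}\}$, which is exactly \eqref{SSD_Cols_PDSSD_cond}. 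Since $\tilde{x}_{i}=(\textbf{s}\textbf{P}+\textbf{s}^{*}\textbf{Q})_{i}$ and $\tilde{x}_{m}=(\textbf{s}\textbf{P}+\textbf{s}^{*}\textbf{Q})_{m}$, this is a constraint on the $i$th and $m$th columns of $\textbf{P}$ and $\textbf{Q}$, giving the claimed property of the precoding matrices.

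I expect the main obstacle to be the second paragraph: rigorously extracting the block $\mathbf{\Xi}$ from $\textbf{X}$ and showing that single-symbol decodability of the whole cross term descends to $[\mathbf{\Xi}^{H}\mathbf{\Xi}]_{1,2}$. In particular, one must argue that contributions from other columns and from the $\textbf{B}_{k}$ matrices (and their conjugate terms) cannot cancel or mask the surviving $\overline{\tilde{x}_{i}}\,\tilde{x}_{m}$ term, so that Lemma \ref{SSD_Cols} applies cleanly to the isolated pair of variables.
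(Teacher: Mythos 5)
Your overall route --- extract a $2\times 2$ sub-array of $\textbf{X}$ of the form $\mathbf{\Xi}$ from a column shared by relays $k$ and $k'$, then invoke Lemma \ref{SSD_Cols} --- is exactly the route the paper intends (it introduces Lemma \ref{SSD_Cols} and its restrictions on $\blacklozenge$ and $\clubsuit$ precisely for this purpose, and those restrictions are the row-monomial restrictions you use). Your first paragraph is sound: $[\textbf{A}_{k}\textbf{A}_{k'}^{H}]_{i,m}\neq 0$ does yield a column $t^{\ast}$ with $[\textbf{A}_{k}]_{i,t^{\ast}}\neq 0$ and $[\textbf{A}_{k'}]_{m,t^{\ast}}\neq 0$, and column monomiality of $\textbf{A}_{k}+\textbf{B}_{k}$ (Lemma \ref{Relay_Mat_condition_1}) makes the $(k,t^{\ast})$ and $(k',t^{\ast})$ entries of $\textbf{X}$ unit multiples of $h_{k}\tilde{x}_{i}$ and $h_{k'}\tilde{x}_{m}$.

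The genuine gap is the pivotal step you yourself flag as unresolved: you assert, but do not prove, that single-symbol decodability of the full cross term $[\textbf{X}\textbf{R}^{-1}\textbf{X}^{H}]_{k,k'}$ ``descends'' to the two selected columns, so that hypothesis \eqref{SSD_Cols_cond} is actually verified. A priori, the cross-variable monomials produced by column $t^{\ast}$ could be cancelled by a conspiracy of three or more other shared columns; in that case the full sum decomposes into single-variable functions while your two-column restriction does not, and Lemma \ref{SSD_Cols} cannot be applied to it. Closing the gap needs two observations that are absent from your sketch. First, since the SSD property must hold for every realisation of $h_{k}$ and $h_{k'}$, the cross term can be separated according to its conjugation pattern ($h_{k}h_{k'}^{*}$, $h_{k}^{*}h_{k'}$, $h_{k}h_{k'}$, $h_{k}^{*}h_{k'}^{*}$); this is what rigorously removes the $\textbf{B}$-matrix and mixed-conjugation contributions from the equation containing $\tilde{x}_{i}\tilde{x}_{m}^{*}$, rather than the vague claim that they ``cannot mask'' it. Second, because the precoding is a coordinate interleaving (each real component $x_{n\square}$ enters exactly one $\tilde{x}_{j}$) and each row of a RS-PDSSDC carries each $\tilde{x}_{j}$ at most once, a monomial mixing a component of $\tilde{x}_{i}$ with a component of $\tilde{x}_{m}$ can arise only from column $t^{\ast}$ or from the at most one column in which $\tilde{x}_{m}$ sits in row $k$ and $\tilde{x}_{i}$ in row $k'$ --- precisely the admissible second row of $\mathbf{\Xi}$. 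Only with these two facts does vanishing of the cross-variable monomials in the full sum force their vanishing in the two-column sum, making $[\mathbf{\Xi}^{H}\mathbf{\Xi}]_{1,2}$ a nonzero sum of single-variable functions, after which Lemma \ref{SSD_Cols} delivers \eqref{SSD_Cols_PDSSD_cond}. Without them, the reduction at the heart of your argument is incomplete.
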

\begin{proof}
Refer to the proof of Lemma $5$ in \cite{HaR}.
\end{proof}
%%%%%%%%%%%%%%%%%%%%%%%%%%%%%%%%%%%%%%%%%%%%%%%%%%%%%%
\begin{lemma}
Let $\textbf{B}_{k}$ and $\textbf{B}_{k'}$ be the relay matrices of a RS-PDSSDC, $\textbf{\textbf{X}}$. If $\left[ \textbf{B}_{k}^{*}\textbf{B}_{k'}^{T}\right]_{i,m}$ is a non zero entry for $i \neq m$, precoding matrices at the source $\textbf{P}$ and $\textbf{Q}$ are such that
\begin{equation}
\label{SSD_Cols_PDSSD_cond_2}
\tilde{x}_{iI}, \tilde{x}_{iQ}, \tilde{x}_{mI} \mbox{ and }\tilde{x}_{mQ} \in \left\lbrace {x}_{nI}, {x}_{nQ}, {x}_{n'I}, {x}_{n'Q} \right\rbrace \mbox{ with }
\end{equation}
\begin{equation*}
\tilde{x}_{iI}, \tilde{x}_{iQ} \in \left\lbrace {x}_{n \square}, {x}_{n' \square}\right\rbrace ~ \mbox{ and } ~\tilde{x}_{mI}, \tilde{x}_{mQ} \in \left\lbrace {x}_{n \square}, {x}_{n' \square}\right\rbrace
\end{equation*}
for some $n \neq n'$ where $1 \leq n,n' \leq N$ and the subscript $\square$ represents either $I$ or $Q$.
\end{lemma}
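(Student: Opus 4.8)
The plan is to mirror, line for line, the proof of Lemma~\ref{SSD_Cols_PDSSD}, replacing the contribution of the matrices $\textbf{A}_k$ (which inject the unconjugated symbols $h_k\tilde{x}_n$) by that of the matrices $\textbf{B}_k$ (which inject the conjugated symbols $h_k^{*}\tilde{x}_n^{*}$), and invoking the conjugated form of $\mathbf{\Xi}$ recorded immediately after Lemma~\ref{SSD_Cols} in place of $\mathbf{\Xi}$ itself. First I would unwind the hypothesis: by the definition of the matrix product, $\left[\textbf{B}_k^{*}\textbf{B}_{k'}^{T}\right]_{i,m}=\sum_t [\textbf{B}_k]_{i,t}^{*}[\textbf{B}_{k'}]_{m,t}$, so this entry being non-zero means there is a common time index $t$ at which both $[\textbf{B}_k]_{i,t}\neq 0$ and $[\textbf{B}_{k'}]_{m,t}\neq 0$. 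Since $\textbf{B}_k$ and $\textbf{B}_{k'}$ are column monomial and each row of a RS-PDSSDC contains a given variable at most once (Lemma~\ref{Relay_Mat_condition_1}), this is precisely the statement that in that column of the codeword $\textbf{X}$ the $k^{th}$ row carries the conjugated symbol $\pm h_k^{*}\tilde{x}_i^{*}$ and the $k'^{th}$ row carries $\pm h_{k'}^{*}\tilde{x}_m^{*}$.

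Next I would isolate the relevant piece of the single-symbol-decodability requirement. Because $\textbf{X}$ is a RS-PDSSDC, $\textbf{R}$ is diagonal by the row-monomial property, so in $\left[\textbf{X}\textbf{R}^{-1}\textbf{X}^{H}\right]_{k,k'}$ the cross term produced by the two overlapping $\textbf{B}$-entries is a non-zero multiple of $h_k^{*}h_{k'}\tilde{x}_i^{*}\tilde{x}_m$. Collecting this contribution together with the entry of the partner column forces us to examine exactly the $2\times 2$ array of the conjugated form $\left[\begin{array}{cc} h_{k}^{*}\tilde{x}_{i}^{*} & h_{k'}^{*}\tilde{x}_{m}^{*}\\ h_{k}^{*}\blacklozenge & h_{k'}^{*}\clubsuit\end{array}\right]$, for which $\left[\mathbf{\Xi}^{H}\mathbf{\Xi}\right]_{1,2}$ must, by Definition~\ref{D_pdssd}, be a complex linear combination of functions each depending only on the in-phase and quadrature parts of a single information variable. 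Writing $\tilde{x}_i = x_{\gamma \square}+jx_{\lambda \square}$ and $\tilde{x}_m=x_{\delta \square}+jx_{\mu \square}$ as in \eqref{variables}, this is exactly hypothesis \eqref{SSD_Cols_cond} stated for the conjugated $\mathbf{\Xi}$.

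Applying the conjugated version of Lemma~\ref{SSD_Cols}, which the discussion following Lemma~\ref{SSD_Cols} has already recorded as holding verbatim, then yields that either $\delta=\gamma,\ \mu=\lambda$ or $\delta=\lambda,\ \mu=\gamma$; in either case the four real quantities $\tilde{x}_{iI},\tilde{x}_{iQ},\tilde{x}_{mI},\tilde{x}_{mQ}$ are drawn from the same pair of original coordinates $\left\lbrace x_{nI},x_{nQ},x_{n'I},x_{n'Q}\right\rbrace$, with $\tilde{x}_i$ and $\tilde{x}_m$ each built from one coordinate of $x_n$ and one of $x_{n'}$, which is the claimed constraint \eqref{SSD_Cols_PDSSD_cond_2} on the precoding matrices $\textbf{P}$ and $\textbf{Q}$ (recall $\tilde{\textbf{s}}=\textbf{s}\textbf{P}+\textbf{s}^{*}\textbf{Q}$ from \eqref{cord_int}). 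The only point that needs care, and hence the main though minor obstacle, is verifying that passing from $\textbf{A}_k\textbf{A}_{k'}^{H}$ to $\textbf{B}_k^{*}\textbf{B}_{k'}^{T}$ and the associated switch from $\mathbf{\Xi}$ to its conjugate does not disturb the column-monomiality bookkeeping nor the admissibility restrictions on the indeterminates $\blacklozenge$ and $\clubsuit$; once one checks that the conjugation merely sends each term of $\left[\mathbf{\Xi}^{H}\mathbf{\Xi}\right]_{1,2}$ to its complex conjugate and leaves the set of admissible index coincidences unchanged, the argument of Lemma~\ref{SSD_Cols_PDSSD} transfers without further modification.
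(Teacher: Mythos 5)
Your proposal is correct and takes essentially the same route as the paper: the paper's entire proof of this lemma is the remark that it can be proved ``on the similar lines of the proof for Lemma~\ref{SSD_Cols_PDSSD}'', which is exactly your strategy of repeating that argument with the $\textbf{B}_{k}$ matrices (conjugated symbols) in place of the $\textbf{A}_{k}$ matrices and with the conjugated form of $\mathbf{\Xi}$ recorded after Lemma~\ref{SSD_Cols} in place of $\mathbf{\Xi}$ itself. Your write-up merely supplies the bookkeeping details that the paper leaves implicit, and they check out.
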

\begin{proof} The result can be proved on the similar lines of the proof for Lemma \ref{SSD_Cols_PDSSD}.
\end{proof}
%%%%%%%%%%%%%%%%%%%%%%%%%%%%%%%%%%%%%%%%%%%%%%%%%%%%%
\begin{corollary}
\label{even_non_zeros}
For a RS-PDSSDC, if $\left[ \textbf{A}_{k}\textbf{A}_{k'}^{H}\right]_{i,m}$ is non-zero, then so is $\left[ \textbf{A}_{k}\textbf{A}_{k'}^{H}\right]_{m,i}$. 
\end{corollary}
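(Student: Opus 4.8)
The plan is to read the hypothesis $[\textbf{A}_{k}\textbf{A}_{k'}^{H}]_{i,m}\neq 0$ as the statement that relays $k$ and $k'$ share a transmission slot carrying $\tilde{x}_{i}$ and $\tilde{x}_{m}$, and then to use single-symbol decodability to force a second shared slot in the transposed position. First I would record the structural fact that, for a RS-PDSSDC, the matrix $\textbf{A}_{k}\textbf{A}_{k'}^{H}$ is both row monomial and column monomial: by Lemma \ref{Relay_Mat_condition_1} together with the row-monomial property of RS-PDSSDCs, each $\textbf{A}_{k}$ has at most one non-zero per row and per column, and a short index chase shows the same holds for the product. Consequently $[\textbf{A}_{k}\textbf{A}_{k'}^{H}]_{i,m}\neq 0$ means precisely that there is a column (time slot) $c$ with $[\textbf{A}_{k}]_{i,c}\neq 0$ and $[\textbf{A}_{k'}]_{m,c}\neq 0$; equivalently, in that slot relay $k$ transmits a unit multiple of $h_{k}\tilde{x}_{i}$ and relay $k'$ transmits $h_{k'}\tilde{x}_{m}$.

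Next I would isolate the relevant scalar. Expanding $[\textbf{X}\textbf{R}^{-1}\textbf{X}^{H}]_{k,k'}$ produces four terms carrying the distinct coefficients $h_{k}h_{k'}^{*}$, $h_{k}h_{k'}$, $h_{k}^{*}h_{k'}^{*}$ and $h_{k}^{*}h_{k'}$; since the PDSSDC property of Definition \ref{D_pdssd} must hold for every realisation of the $h$'s, these coefficients are linearly independent and each term must separately be single-symbol decodable. In particular the purely-$\textbf{A}$ term $\tilde{\textbf{s}}\,\textbf{M}\,\tilde{\textbf{s}}^{H}$ with $\textbf{M}=\textbf{A}_{k}\textbf{R}^{-1}\textbf{A}_{k'}^{H}$ must be a sum of single-variable functions of the original variables. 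Because $\textbf{R}$ is diagonal with positive entries, $\textbf{M}$ has exactly the same support as $\textbf{A}_{k}\textbf{A}_{k'}^{H}$, so $[\textbf{M}]_{i,m}\neq 0$ and this quadratic form contains the term $[\textbf{M}]_{i,m}\,\tilde{x}_{i}\tilde{x}_{m}^{*}$. By Lemma \ref{SSD_Cols_PDSSD} the non-zero $(i,m)$ entry forces $\tilde{x}_{i}$ and $\tilde{x}_{m}$ to be built from the coordinates of the same pair $x_{n},x_{n'}$, whence $\tilde{x}_{i}\tilde{x}_{m}^{*}$ genuinely carries cross terms mixing $x_{n}$ and $x_{n'}$, which are not single-variable functions and must therefore cancel.

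Finally I would argue that this cancellation can only be supplied by the transposed entry. The diagonal terms $[\textbf{M}]_{a,a}|\tilde{x}_{a}|^{2}$ are automatically single-variable (a sum of two real squares), so the offending cross terms can only be removed by another off-diagonal term $[\textbf{M}]_{a,b}\,\tilde{x}_{a}\tilde{x}_{b}^{*}$ whose factors jointly involve $x_{n}$ and $x_{n'}$. Since the coordinate interleaving confines $x_{n},x_{n'}$ to the components $\tilde{x}_{i},\tilde{x}_{m}$ (each row being monomial in the $\tilde{x}_{n}$, and the admissibility restrictions on $\blacklozenge,\clubsuit$ stated before Lemma \ref{SSD_Cols} forbidding reuse of $\tilde{x}_{i}$ and $\tilde{x}_{m}$), the only such candidate is the $(m,i)$ term $[\textbf{M}]_{m,i}\,\tilde{x}_{m}\tilde{x}_{i}^{*}$. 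Writing $\tilde{x}_{i}\tilde{x}_{m}^{*}=R+jS$ with $R$ the cross-term-bearing real part and $S$ the single-variable imaginary part, the combined contribution is $([\textbf{M}]_{i,m}+[\textbf{M}]_{m,i})R+j([\textbf{M}]_{i,m}-[\textbf{M}]_{m,i})S$, and cancelling the cross terms carried by $R$ forces $[\textbf{M}]_{m,i}=-[\textbf{M}]_{i,m}\neq 0$; hence $[\textbf{A}_{k}\textbf{A}_{k'}^{H}]_{m,i}\neq 0$. The main obstacle is exactly this uniqueness step: one must rule out that $x_{n}$ or $x_{n'}$ reappears in some other precoded component and lets a different off-diagonal term absorb the cross terms, and this is precisely where the row-monomial structure of the RS-PDSSDC and the restrictions on $\blacklozenge,\clubsuit$ are essential.
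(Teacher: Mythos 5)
Your proposal is correct and takes essentially the same route as the paper, whose proof of this corollary is precisely an appeal to the arguments inside Lemma \ref{SSD_Cols} and Lemma \ref{SSD_Cols_PDSSD}: a shared slot produces the cross term $[\textbf{M}]_{i,m}\,\tilde{x}_{i}\tilde{x}_{m}^{*}$ (with your $\textbf{M}=\textbf{A}_{k}\textbf{R}^{-1}\textbf{A}_{k'}^{H}$, which has the same support as $\textbf{A}_{k}\textbf{A}_{k'}^{H}$ since $\textbf{R}$ is positive diagonal), and its monomials mixing $x_{n}$ and $x_{n'}$ can only be cancelled by a second shared slot in the transposed position. The one slip is your final sign claim: whether $\mbox{Re}\,(\tilde{x}_{i}\tilde{x}_{m}^{*})$ or $\mbox{Im}\,(\tilde{x}_{i}\tilde{x}_{m}^{*})$ carries the cross terms depends on which pairing alternative of Lemma \ref{SSD_Cols} holds, so cancellation forces $[\textbf{M}]_{m,i}=-[\textbf{M}]_{i,m}$ in one case but $[\textbf{M}]_{m,i}=+[\textbf{M}]_{i,m}$ in the other (the paper's own code \eqref{pdssd_4_relay}, where $\tilde{x}_{1}=x_{1I}+jx_{4Q}$ and $\tilde{x}_{3}=x_{1Q}+jx_{4I}$, realizes the $+$ sign), yet in either case $[\textbf{M}]_{m,i}\neq 0$, which is all the corollary asserts.
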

\begin{proof}
Follows from the proof for Lemma \ref{SSD_Cols} and Lemma \ref{SSD_Cols_PDSSD}.
\end{proof}
\indent From the definition of a PDSSDC (Definition \ref{D_pdssd}), non-zero entries of the $k^{th}$ row contains variables of the form $\pm~ h_{k}\tilde{x}_{n}$, $\pm~ h_{k}^{*}\tilde{x}^{*}_{n}$ or multiples of these by $j$. Therefore,
 \begin{eqnarray*}
\left[ \textbf{X}\textbf{X}^{H}\right]_{k, k} = |h_{k}|^{2}\left[\tilde{\textbf{s}}\textbf{A}_{k}\textbf{A}_{k}^{H}\tilde{\textbf{s}}^{H} + \tilde{\textbf{s}}^{*}\textbf{B}_{k}\textbf{B}_{k}^{H}\tilde{\textbf{s}}^{T}\right] + h_{k}h_{k}\left[\tilde{\textbf{s}}\textbf{A}_{k}\textbf{B}_{k}^{H}\tilde{\textbf{s}}^{T}\right] \\ + ~h_{k}^{*}h_{k}^{*}\left[\tilde{\textbf{s}}^{*}\textbf{B}_{k}\textbf{A}_{k}^{H}\tilde{\textbf{s}}^{H}\right]  = \sum_{i = 1}^{N} |h_{k}|^{2}\left(\omega_{i,k}^{(1)}|x_{iI}|^{2} + \omega_{i,k}^{(2)}|x_{iQ}|^{2}\right)
\end{eqnarray*}
where $\omega_{i,k}^{(1)}, \omega_{i,k}^{(2)} \in \mathbb{R}^{+}$ for all $k = 1, 2, \cdots, K$. From the results of Lemma $1$ in \cite{LiX}, we have
\begin{equation}
\label{row_norm}
\textbf{A}_{k}\textbf{A}_{k}^{H} + \textbf{B}_{k}^{*}\textbf{B}_{k}^{T} = \mbox{diag}\left[E_{1,k}, E_{2,k}, \cdots E_{n,k} \right]
\end{equation}
where $E_{n, k}$ are strictly positive real numbers.
%%%%%%%%%%%%%%MAIN THEOREM%%%%%%%%%%%%%%%%%%%%%%%%%%%%
\begin{lemma}
\label{main_thm}
Let $k$ and $k'$ represent the indices of the rows of a RS-PDSSDC, that are \textbf{R}-non-orthogonal, then the corresponding relay matrices $\textbf{A}_{k}, \textbf{B}_{k}, \textbf{A}_{k'}$ and $\textbf{B}_{k'}$ satisfy the following conditions,
\begin{itemize}
\item $\left[ \textbf{A}_{k}\textbf{A}_{k'}^{H}\right]_{i,i}  = \left[\textbf{B}_{k}^{*}\textbf{B}_{k'}^{T}\right]_{i,i}  = 0$ for all $i = 1, 2 \cdots N.$
\item $\textbf{A}_{k}\textbf{A}_{k'}^{H}$ and $\textbf{B}_{k}^{*}\textbf{B}_{k'}^{T}$ are both column and row monomial matrices.
\item $\textbf{A}_{k}\textbf{A}_{k'}^{H} + \textbf{B}_{k}^{*}\textbf{B}_{k'}^{T}$ is column and row monomial matrix.
\item The number of non-zero entries in $\textbf{A}_{k}\textbf{A}_{k'}^{H} + \textbf{B}_{k}^{*}\textbf{B}_{k'}^{T}$ is even.
\item The matrices $\tilde{\textbf{A}}_{k,k'}$ and $\tilde{\textbf{B}}_{k,k'}$ given by $\tilde{\textbf{A}}_{k,k'} = \left[\textbf{\textbf{A}}_{k}^{T} ~ \textbf{\textbf{A}}_{k'}^{T} \right]^{T}$ and $\tilde{\textbf{B}}_{k,k'} = \left[\textbf{\textbf{B}}_{k}^{T} ~ \textbf{\textbf{B}}_{k'}^{T} \right]^{T}$ satisfy the following inequality :
\begin{equation}
\mbox{ Rank}\left[\tilde{\textbf{A}}_{k,k'}\tilde{\textbf{A}}_{k,k'}^{H} + \tilde{\textbf{B}}_{k,k'}^{*}\tilde{\textbf{B}}_{k,k'}^{T}\right] \geq ~ \left\{ \begin{array}{cccccc}
2m &\mbox{if}& N = 2m \mbox{ and}\\
2m + 2 &\mbox{ if }& N = 2m + 1
\end{array}
\right.
\end{equation}
where $m$ is a positive integer.
\end{itemize}
\end{lemma}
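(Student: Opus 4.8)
The plan is to first convert the two hypotheses (RS-PDSSDC, and rows $k,k'$ being \textbf{R}-non-orthogonal) into structural statements about the products $\textbf{A}_{k}\textbf{A}_{k'}^{H}$ and $\textbf{B}_{k}^{*}\textbf{B}_{k'}^{T}$, and then to deduce the rank bound from the block Gram matrix $\tilde{\textbf{A}}_{k,k'}\tilde{\textbf{A}}_{k,k'}^{H}+\tilde{\textbf{B}}_{k,k'}^{*}\tilde{\textbf{B}}_{k,k'}^{T}$ by a Schur-complement argument. Throughout I would use that, for a RS-PDSSDC, $\textbf{R}$ is diagonal and each $\textbf{A}_{k},\textbf{B}_{k}$ is simultaneously row and column monomial with entries in $\left\{0,\pm 1,\pm j\right\}$ (Lemma \ref{Relay_Mat_condition_1} together with the row-monomial property), and that $\textbf{A}_{k}\textbf{A}_{k}^{H}+\textbf{B}_{k}^{*}\textbf{B}_{k}^{T}$ is a strictly positive diagonal matrix by \eqref{row_norm}.

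For the first three bullets I would expand $\left[\textbf{X}\textbf{R}^{-1}\textbf{X}^{H}\right]_{k,k'}$ as a quadratic form in $\tilde{\textbf{s}}$ with coefficient matrices $\textbf{A}_{k}\textbf{R}^{-1}\textbf{A}_{k'}^{H}$, $\textbf{B}_{k}^{*}\textbf{R}^{-1}\textbf{B}_{k'}^{T}$ and the two mixed products, and impose the single-symbol form of Definition \ref{D_pdssd}. Because a product of two row-and-column monomial matrices is again row-and-column monomial, $\textbf{A}_{k}\textbf{A}_{k'}^{H}$ and $\textbf{B}_{k}^{*}\textbf{B}_{k'}^{T}$ are monomial with no extra work, which is the second bullet. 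The vanishing of their diagonals and the precise location of their off-diagonal nonzeros (the first bullet and the pairing it induces) I would read off from Lemmas \ref{SSD_Cols} and \ref{SSD_Cols_PDSSD}: a surviving nonzero at $(i,m)$ is admissible only for the matched index pattern of Lemma \ref{SSD_Cols}, and the diagonal case $i=m$ is excluded by that same analysis. The monomiality of the sum (third bullet) is the delicate point, since two individually monomial matrices can add to a non-monomial one; here I would invoke Lemma \ref{SSD_Cols_PDSSD} to show that whenever $\textbf{A}_{k}\textbf{A}_{k'}^{H}$ and $\textbf{B}_{k}^{*}\textbf{B}_{k'}^{T}$ are both nonzero in a common row they must share the same column, so the two summands never spread across a row. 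Writing $\textbf{M}:=\textbf{A}_{k}\textbf{A}_{k'}^{H}+\textbf{B}_{k}^{*}\textbf{B}_{k'}^{T}$, Corollary \ref{even_non_zeros} and its $\textbf{B}$-counterpart make the support of $\textbf{M}$ symmetric, so together with the zero diagonal and the monomiality just established the support is a disjoint union of transposed pairs $\left\{(i,m),(m,i)\right\}$ with $i\neq m$, i.e. a partial matching on $\left\{1,\dots,N\right\}$ with no fixed points; in particular the number of nonzeros is even, which is the fourth bullet.

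For the rank inequality I would form the $2N\times 2N$ Hermitian positive-semidefinite matrix
\begin{equation*}
\textbf{G}:=\tilde{\textbf{A}}_{k,k'}\tilde{\textbf{A}}_{k,k'}^{H}+\tilde{\textbf{B}}_{k,k'}^{*}\tilde{\textbf{B}}_{k,k'}^{T}=\left[\begin{array}{cc} \textbf{D}_{k} & \textbf{M} \\ \textbf{M}^{H} & \textbf{D}_{k'} \end{array}\right],
\end{equation*}
where $\textbf{D}_{k}=\textbf{A}_{k}\textbf{A}_{k}^{H}+\textbf{B}_{k}^{*}\textbf{B}_{k}^{T}$ and $\textbf{D}_{k'}$ are strictly positive diagonal matrices by \eqref{row_norm}, with diagonal entries $E_{n,k}$ and $E_{n,k'}$. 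Since $\textbf{D}_{k}$ is invertible, the Schur-complement identity gives $\mbox{Rank}(\textbf{G})=N+\mbox{Rank}(\textbf{S})$ with $\textbf{S}:=\textbf{D}_{k'}-\textbf{M}^{H}\textbf{D}_{k}^{-1}\textbf{M}$. When $N=2m$ the full-rank diagonal block already forces $\mbox{Rank}(\textbf{G})\geq N=2m$, so nothing further is needed. When $N=2m+1$ the partial matching of the support of $\textbf{M}$ on an odd index set must leave at least one index $p$ unmatched, so the $p$-th row and the $p$-th column of $\textbf{M}$ vanish; then, writing $\textbf{e}_{p}$ for the $p$-th standard basis vector, $\textbf{M}\textbf{e}_{p}=\textbf{0}$ kills the $p$-th row and column of $\textbf{M}^{H}\textbf{D}_{k}^{-1}\textbf{M}$, so that $\textbf{S}\textbf{e}_{p}=E_{p,k'}\textbf{e}_{p}\neq\textbf{0}$ and $\mbox{Rank}(\textbf{S})\geq 1$, giving $\mbox{Rank}(\textbf{G})\geq N+1=2m+2$.

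The main obstacle I anticipate lies in the third bullet, namely proving that the sum $\textbf{M}$ remains monomial: individually monomial matrices can add to a non-monomial one, so one genuinely has to exploit the precoding/pairing constraints of Lemma \ref{SSD_Cols_PDSSD} to force the two summands into a common column whenever they are both nonzero in the same row. Once the support of $\textbf{M}$ is pinned down as a fixed-point-free partial matching, the rank bound itself is short linear algebra, the only subtlety being the parity argument that manufactures an unmatched index, and hence the extra unit of rank, precisely when $N$ is odd.
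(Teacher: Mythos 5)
Your proposal founders at its foundation: the zero-diagonal claim (first bullet). You assert that the case $i=m$ is ``excluded by that same analysis'' as Lemma \ref{SSD_Cols}, but that lemma cannot do this job: its conclusion --- that $\tilde{x}_{i}$ and $\tilde{x}_{m}$ are built from the same pair of real symbols --- is trivially satisfied when $i=m$, and the corresponding term $h_{k}h_{k'}^{*}\left[\textbf{A}_{k}\textbf{R}^{-1}\textbf{A}_{k'}^{H}\right]_{i,i}|\tilde{x}_{i}|^{2} = h_{k}h_{k'}^{*}\left[\textbf{A}_{k}\textbf{R}^{-1}\textbf{A}_{k'}^{H}\right]_{i,i}\left(x_{\gamma \square}^{2}+x_{\lambda \square}^{2}\right)$ is already a sum of single-symbol functions, so single-symbol decodability alone never forbids a diagonal entry. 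A concrete witness: the two rows $\left(h_{k}\tilde{x}_{1},\ h_{k}\tilde{x}_{2}\right)$ and $\left(h_{k'}\tilde{x}_{1},\ -h_{k'}\tilde{x}_{2}\right)$ with $\tilde{x}_{1}=x_{1I}+jx_{2Q}$, $\tilde{x}_{2}=x_{2I}+jx_{1Q}$ form a row-monomial, \textbf{R}-non-orthogonal, single-symbol decodable pair with $\left[\textbf{A}_{k}\textbf{A}_{k'}^{H}\right]_{1,1}\neq 0$. What actually kills diagonal entries is the paper's standing restriction to \emph{unitary} PDSSDCs (Definition \ref{u_p_dssd}): if both relays carry $\tilde{x}_{i}$ in a common column, the weight matrix $\Phi$ of the real symbol $x_{\gamma \square}$ acquires an off-diagonal Gram entry of the form $\epsilon_{1}\epsilon_{2}^{*}h_{k}h_{k'}^{*}$ plus possibly a term in $h_{k}^{*}h_{k'}$, and since these are linearly independent functions of $(h_{k},h_{k'})$ the entry cannot vanish identically, contradicting $\Phi\Phi^{H}\in\mathcal{D}_{K}^{+}$. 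Your proposal never invokes unitarity (it is absent from your list of standing facts), so the first bullet --- and everything downstream that leans on the zero diagonal: the fixed-point-free matching, the even count, and the free index that buys the extra unit of rank when $N$ is odd --- is unsupported.

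The third bullet, which you yourself flag as the main obstacle, is likewise not closed by what you cite. Applying Lemma \ref{SSD_Cols_PDSSD} to a nonzero $(i,m)$ entry of $\textbf{A}_{k}\textbf{A}_{k'}^{H}$ and a nonzero $(i,m')$ entry of $\textbf{B}_{k}^{*}\textbf{B}_{k'}^{T}$ only forces $\tilde{x}_{m}$ and $\tilde{x}_{m'}$ to be built from the same symbol pair as $\tilde{x}_{i}$; it does not force $m=m'$, because one symbol pair generates two distinct interleaved variables (e.g.\ $\tilde{x}_{1}=x_{1I}+jx_{4Q}$ and $\tilde{x}_{3}=x_{1Q}+jx_{4I}$ both arise from $\left\lbrace x_{1},x_{4}\right\rbrace$). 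Concluding $m=m'$ needs, in addition, the zero diagonal of the first bullet and an argument that at most two of the $\tilde{x}_{n}$'s carried by rows $k,k'$ come from any one symbol pair; you supply neither. A smaller unaddressed point: the two summands take values in $\left\lbrace \pm 1,\pm j\right\rbrace$ on their supports, so they can cancel at a common position, and your claim that the support of $\textbf{A}_{k}\textbf{A}_{k'}^{H}+\textbf{B}_{k}^{*}\textbf{B}_{k'}^{T}$ inherits the symmetry of Corollary \ref{even_non_zeros} must handle that case. By contrast, your final step is sound: granted the support structure, the rank argument (principal-submatrix bound for $N=2m$; Schur-complement rank additivity with the block diagonals strictly positive by \eqref{row_norm}, plus a free index $p$ with the $p$-th column of the off-diagonal block zero, for $N=2m+1$) is correct and clean. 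Note the paper itself defers this proof to Lemma 7 of \cite{HaR}; judged against the machinery its preceding lemmas and standing unitarity assumption set up, your proposal bypasses that machinery at exactly the two points above, and those are precisely the points where it breaks.
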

\begin{proof}
Refer to the proof of Lemma $7$ in \cite{HaR}.
\end{proof}
\indent Using the properties of relay matrices $\textbf{A}_{k}$, $\textbf{A}_{k'}$, $\textbf{B}_{k}$ and $\textbf{B}_{k'}$ of a RS-PDSSDC corresponding to two different rows that are (i) \textbf{R}-orthogonal and (ii) \textbf{R}-non-orthogonal, an upperbound on the maximum rate, $\frac{N}{T}$ is derived in the following theorem.
%%%%%%%%%%%%%%%%%Rate upperbound%%%%%%%%%%%%%%%%%%%%%%%%%% 
\begin{theorem}
\label{rate_thm}
The symbol-rate of a RS-PDSSDC satisfies the inequality :
{\small
\begin{equation}
\label{Rate_UpperBound}
\mbox{ Rate } = \frac{N}{T} ~ \leq  ~ \left\{ \begin{array}{cccccc}
\frac{2}{l}  & \mbox{if}& N = 2m, K = 2l\\
\frac{2}{l + 1}  & \mbox{if}& N = 2m, K = 2l + 1\\
\frac{2m + 1}{(m + 1)l}  & \mbox{if}& N = 2m + 1, K = 2l\\
\frac{4m + 2}{(2m + 2)l + 2m + 1}  & \mbox{if}& N = 2m + 1, K = 2l + 1.
\end{array}
\right.
\end{equation}
}
where $l$ and $m$ are positive integers.
\end{theorem}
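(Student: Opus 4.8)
The plan is to convert the rate bound into a counting argument on the columns (time slots) of the relay matrices, bookkept according to the semi-orthogonal grouping of the rows. By Definition \ref{D_so_updssd}, the $K$ rows of $\textbf{X}$ partition into $s$ \emph{singleton} groups (a row that is $\textbf{R}$-orthogonal to every other row) and $p$ \emph{pair} groups (two rows that are mutually $\textbf{R}$-non-orthogonal), so that $2p + s = K$. For a group $G$ I would let $\alpha_{G}$ (resp. $\beta_{G}$) denote the number of columns $t \in \{1, \dots, T\}$ in which $\textbf{A}_{k}$ (resp. $\textbf{B}_{k}$) has a non-zero entry for some $k \in G$; these are the slots in which the relays of $G$ are ``$\textbf{A}$-active'' (resp. ``$\textbf{B}$-active'').

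The first key step is a disjointness-across-groups observation. If $k \in G$ and $k' \in G'$ with $G \neq G'$, then the two rows are $\textbf{R}$-orthogonal, so \eqref{orth_cond_As} gives $\textbf{A}_{k}\textbf{A}_{k'}^{H} = \textbf{0}_{N}$ and $\textbf{B}_{k}^{*}\textbf{B}_{k'}^{T} = \textbf{0}_{N}$; that is, $\textbf{A}_{k}$ and $\textbf{A}_{k'}$ are column-disjoint, and likewise for the $\textbf{B}$'s. Hence the $\textbf{A}$-active column sets of distinct groups are pairwise disjoint subsets of $\{1, \dots, T\}$, and the same holds for the $\textbf{B}$-active sets. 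Summing over all groups therefore yields $\sum_{G}\alpha_{G} \leq T$ and $\sum_{G}\beta_{G} \leq T$, so that
\begin{equation*}
\sum_{G}\left(\alpha_{G} + \beta_{G}\right) \leq 2T.
\end{equation*}

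The second key step is a per-group lower bound obtained from rank. For any matrix $\textbf{M}$, $\mbox{Rank}(\textbf{M})$ is at most its number of non-zero columns, and $\mbox{Rank}(\textbf{M}\textbf{M}^{H}) = \mbox{Rank}(\textbf{M})$. For a singleton $G = \{k\}$ I would apply this to $\textbf{M} = \left[\textbf{A}_{k}~~\textbf{B}_{k}^{*}\right]$, whose non-zero column count is exactly $\alpha_{G} + \beta_{G}$ and which satisfies $\textbf{M}\textbf{M}^{H} = \textbf{A}_{k}\textbf{A}_{k}^{H} + \textbf{B}_{k}^{*}\textbf{B}_{k}^{T}$; by \eqref{row_norm} this is diagonal with strictly positive entries, hence of rank $N$, giving $\alpha_{G} + \beta_{G} \geq N$. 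For a pair $G = \{k, k'\}$ I would instead take $\textbf{M} = \left[\tilde{\textbf{A}}_{k,k'}~~\tilde{\textbf{B}}_{k,k'}^{*}\right]$, so that $\textbf{M}\textbf{M}^{H} = \tilde{\textbf{A}}_{k,k'}\tilde{\textbf{A}}_{k,k'}^{H} + \tilde{\textbf{B}}_{k,k'}^{*}\tilde{\textbf{B}}_{k,k'}^{T}$ and its non-zero column count is again $\alpha_{G} + \beta_{G}$; the rank inequality of Lemma \ref{main_thm} then gives $\alpha_{G} + \beta_{G} \geq 2m = N$ when $N = 2m$, and $\alpha_{G} + \beta_{G} \geq 2m + 2 = N + 1$ when $N = 2m + 1$.

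Finally I would combine the two steps and optimize over the grouping. When $N = 2m$ every group contributes at least $2m$, so $2T \geq 2m(p + s) = 2m(K - p)$; since $p \leq \lfloor K/2 \rfloor$, the smallest admissible $T$ occurs at $p = \lfloor K/2 \rfloor$, giving $T \geq ml$ for $K = 2l$ and $T \geq m(l+1)$ for $K = 2l+1$, i.e. the first two cases of \eqref{Rate_UpperBound}. When $N = 2m + 1$, substituting $s = K - 2p$ gives
\begin{equation*}
2T \geq (2m+2)p + (2m+1)s = (2m+1)K - 2mp,
\end{equation*}
which is again minimized at $p = \lfloor K/2 \rfloor$; plugging in $K = 2l$ yields $T \geq l(m+1)$, while $K = 2l + 1$ yields $2T \geq (2m+2)l + (2m+1)$, producing the last two cases of \eqref{Rate_UpperBound} after writing $\mbox{Rate} = N/T$. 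The main obstacle I expect is the bookkeeping in the first step, namely verifying that $\textbf{R}$-orthogonality across groups forces genuine simultaneous column-disjointness of all the $\textbf{A}$'s (and of all the $\textbf{B}$'s), so that the per-group column budgets add to at most $T$, together with getting the optimization direction right: the rate gain over DOSTBCs comes precisely from making $p$ as large as possible, and for odd $N$ one must carry the extra ``$+1$ per pair'' from Lemma \ref{main_thm} correctly through the substitution.
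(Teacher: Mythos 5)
Your proposal is correct and follows essentially the route the paper sets up for this theorem: partition the rows into singletons and pairs via the semi-orthogonality matching, use the cross-group column-disjointness from \eqref{orth_cond_As} to budget at most $2T$ active columns, and lower-bound each group's column count by rank via \eqref{row_norm} for singletons and the rank inequality of Lemma \ref{main_thm} for pairs, then take $p = \lfloor K/2 \rfloor$ as the worst case. Your case-by-case arithmetic reproduces all four bounds of \eqref{Rate_UpperBound} exactly, so the argument is complete.
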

\begin{proof}
Refer to the proof of Theorem 1 in \cite{HaR}.
\end{proof}
\section{Construction of RS-PDSSDCs}\label{sec4}
In this section, we construct RS-PDSSDCs when the number of relays $K \geq 4$. The construction provides codes achieving the upperbound in \eqref{Rate_UpperBound} when (i) $N$ and $K$ are multiples of 4 and (ii) $N$ is a multiple of 4 and $K$ is 3 modulo 4. For the rest of the values of $N$ and $K$, codes meeting the upperbound are not known. In particular, for values of $N < 4$ and any $K$, the authors are not aware of RS-PDSSDCs with rates higher than that of row monomial DOSTBCs. The following construction provides RS-PDSSDCs with rates higher than that of row monomial DOSTBCs when $N \geq 4$ and $K \geq 4$. We first provide the construction of the precoding matrices \textbf{P} and \textbf{Q} and then present the construction of RS-PDSSDCs.
\subsection{Construction of precoding matrices $\textbf{P}$ and $\textbf{Q}$}\label{cont_p_q}
Let $\mathbf{\Gamma}, \mathbf{\Omega} \in \mathbb{C}^{4 \times 4}$ be given by
{\small
\begin{equation*}
\mathbf{\Gamma} = \frac{1}{2}\left[\begin{array}{rrrr}
1 & 0 & -j & 0\\
0  & 1 & 0 & -j\\
0 & 1 & 0 & j\\
1 & 0 & j & 0\\
\end{array}\right] \mbox{ and } \mathbf{\Omega} = \frac{1}{2} \left[\begin{array}{rrrr}
1 & 0 & j & 0\\
0  & 1 & 0 & j\\
0 & -1 & 0 & j\\
-1  & 0 & j & 0\\
\end{array}\right].
\end{equation*}
}
Let $N = 4y + a$, where $a$ can take values of  $0, 1, 2 \mbox{ and } 3$ and $y$ is any positive integer. For a given value of $a$ and $y$, the precoding matrices $\textbf{P}$ and $\textbf{Q}$ at the source are constructed as,
{\small
\begin{equation*}
\textbf{P} = \left[\begin{array}{rrrr}
\mathbf{\Gamma} \otimes \textbf{I}_{y} & \textbf{0}_{4y \times a}\\
\textbf{0}_{a \times 4y}  & \textbf{I}_{a}\\
\end{array}\right];~ \textbf{Q} = \left[\begin{array}{rrrr}
\mathbf{\Omega} \otimes \textbf{I}_{y} & \textbf{0}_{4y \times a}\\
\textbf{0}_{a \times 4y}  & \textbf{0}_{a}\\
\end{array}\right].
\end{equation*}
}
\begin{example} 
For $N = 6,$ we have $y = 1 \mbox{ and } a =2$. Following the above construction method, precoding matrices $\textbf{P}$ and $\textbf{Q}$ are given by,
{\small
\begin{equation*}
\textbf{P} = \frac{1}{2}\left[\begin{array}{rrrrrr}
1 & 0 & -j & 0 & 0 & 0\\
0  & 1 & 0 & -j & 0 & 0\\
0 & 1 & 0 & j & 0 & 0\\
1  & 0 & j & 0 & 0 & 0\\
0 & 0 & 0 & 0 & 2 & 0\\
0 & 0 & 0 & 0 & 0 & 2\\
\end{array}\right];~ \textbf{Q} = \frac{1}{2} \left[\begin{array}{rrrrrr}
1 & 0 & j & 0 & 0 & 0\\
0  & 1 & 0 & j & 0 & 0\\
0 & -1 & 0 & j & 0 & 0\\
-1 & 0 & j & 0 & 0 & 0\\
0 & 0 & 0  & 0 & 0 & 0\\
0 & 0 & 0 & 0 & 0 & 0\\
\end{array}\right].
\end{equation*}
}
\end{example}
\subsection{Construction of RS-PDSSDCs}
Through out this subsection, we denote a RS-PDSSDC for $K$ relays with $N$ variables as $\textbf{X}\left(N, K \right)$. Construction of RS-PDSSDCs is divided in to three cases depending on the values of $N$ and $K$.

\begin{case}\label{case1} $N = 4y$ and $K = 4x$ : In this case, we construct RS-PDSSDCs in the following 4 steps.\\
%Using the RS-PDSSDC for $N = 4$ and $K = 4$, we construct RS-PDSSDCs for $N = 4y$, and $K = 4x$  recursively for any positive integer $x$ and $y$ achieving the bound given in \eqref{Rate_UpperBound}.
Step (i) : Let $\textbf{U}_{x_{1},~ x_{2}}$ be a $2 \times 2$ Alamouti design in complex variables $x_{1}$, $x_{2}$ as given below,
\begin{equation}
\label{alamouti}
\textbf{U}_{x_{1},~ x_{2}}  = \left[\begin{array}{rr}
x_{1} & x_{2}\\
- x_{2}^{*}  & x_{1}^{*}\\
\end{array}\right].
\end{equation}
Using the design in \eqref{alamouti}, construct a $4 \times 4$ design, $\mathbf{\Omega}_{m}$ in 4 complex variables  $\tilde{x}_{4m + 1}, \tilde{x}_{4m + 2}, \tilde{x}_{4m + 3}$ and $\tilde{x}_{4m + 4}$ as shown below for all $m =  0, 1, \cdots y - 1$.
{\small
\begin{equation*}
\mathbf{\Omega}_{m} = \left[\begin{array}{rr}
\textbf{U}_{\tilde{x}_{4m + 1},~ \tilde{x}_{4m + 2}} & \textbf{U}_{\tilde{x}_{4m + 3},~ \tilde{x}_{4m + 4}}\\
\textbf{U}_{\tilde{x}_{4m + 3},~ \tilde{x}_{4m + 4}}  & \textbf{U}_{\tilde{x}_{4m + 1},~ \tilde{x}_{4m + 2}}\\
\end{array}\right] = \left[\begin{array}{rrrr}
\tilde{x}_{4m + 1} & \tilde{x}_{4m + 2} & \tilde{x}_{4m + 3} & \tilde{x}_{4m + 4}\\
- \tilde{x}_{4m + 2}^{*}  & \tilde{x}_{4m + 1}^{*} & - \tilde{x}_{4m + 4}^{*} & \tilde{x}_{4m + 3}^{*}\\
\tilde{x}_{4m + 3} &  \tilde{x}_{4m + 4} & \tilde{x}_{4m + 1} & \tilde{x}_{4m + 2}\\
- \tilde{x}_{4m + 4}^{*}  & \tilde{x}_{4m + 3}^{*} & - \tilde{x}_{4m + 2}^{*} & \tilde{x}_{4m + 1}^{*}\\
\end{array}\right]
\end{equation*}
}
where
$$\tilde{x}_{4m + 1} =  {x}_{(4m + 1)I} + j {x}_{(4m + 4)Q};~ \\
\tilde{x}_{4m + 2} = {x}_{(4m + 2)I} + j {x}_{(4m + 3)Q};$$
$$\tilde{x}_{4m + 3} = {x}_{(4m + 1)Q}  + j {x}_{(4m + 4)I};~ \\
 \tilde{x}_{4m + 4} = {x}_{(4m + 2)Q} + j {x}_{(4m + 3)I}.$$
Step (ii) : Let $\textbf{H}, \mathbf{\Delta}$ and $\mathbf{\Theta} \in \mathbb{C}^{K \times K}$ given by
$\textbf{H} = \mbox{diag}\left\lbrace h_{1}, h_{2}, \cdots, h_{K} \right\rbrace$, 
$\mathbf{\Delta} = \mbox{diag}\left\lbrace 1, 0, 1, 0, \cdots 0\right\rbrace \mbox{ and }
\mathbf{\Theta} = \mbox{diag}\left\lbrace 0, 1, 0, 1, \cdots 1\right\rbrace$
where $h_{1}, h_{2}, \cdots h_{K}$ are complex variables and $\mathbf{\Delta}, \mathbf{\Theta}$ are such that $\mathbf{\Delta} + \mathbf{\Theta}  = \textbf{I}_{K}$. Using $\textbf{H}, \mathbf{\Delta}$ and $\mathbf{\Theta}$, construct a diagonal matrix, $\textbf{G}$ as $\textbf{G} = \textbf{H}\mathbf{\Delta} + \textbf{H}^{*}\mathbf{\Theta}.$\\
Step (iii) : Using  $\mathbf{\Omega}_{m}$, construct a $4x \times 4x$ matrix $\textbf{X}_{m}$ given by $\mathbf{\Omega}_{m} \otimes \textbf{I}_{2}^{\otimes (x - 1)}$ for each $m =  0, 1, \cdots y - 1$.\\
Step (iv) : A RS-PDSSDC, $\textbf{X}\left(N, K\right)$ is constructed using $\textbf{X}_{m}$ and $\textbf{G}$ as $\textbf{X}\left(N, K \right)  = \textbf{G}\left[ \textbf{X}_{0} ~  \textbf{X}_{1} ~ \cdots ~ \textbf{X}_{y - 1} \right]$
where the matrix $\left[ \textbf{X}_{0} ~  \textbf{X}_{2} ~ \cdots ~ \textbf{X}_{y - 1} \right]$ is obtained by juxtaposing the matrices $\textbf{X}_{0}, \textbf{X}_{1}, \cdots, \textbf{X}_{y - 1}$.
\begin{example}
For $N = 4$ and $K = 4$, we have $x = y = 1$. Following Step (i) to Step (iv) in the above construction, we have $\textbf{G} = \mbox{diag}\left\lbrace h_{1}, h_{2}^{*}, h_{3}, h_{4}^{*}\right\rbrace$ and $\textbf{X}_{0} = \mathbf{\Omega}_{0}$. Hence $\textbf{X}\left(4, 4\right)$ is given by,
{\small
\begin{equation}
\label{pdssd_4_relay}
\textbf{X}\left(4, 4\right)  = \left[\begin{array}{rrrr}
h_{1}\tilde{x}_{1} & h_{1}\tilde{x}_{2} & h_{1}\tilde{x}_{3} & h_{1}\tilde{x}_{4}\\
- h_{2}^{*}\tilde{x}_{2}^{*}  & h_{2}^{*}\tilde{x}_{1}^{*} & - h_{2}^{*}\tilde{x}_{4}^{*} & h_{2}^{*}\tilde{x}_{3}^{*}\\
h_{3}\tilde{x}_{3} & h_{3}\tilde{x}_{4} & h_{3}\tilde{x}_{1} & h_{3}\tilde{x}_{2}\\
- h_{4}^{*}\tilde{x}_{4}^{*}  & h_{4}^{*}\tilde{x}_{3}^{*} & - h_{4}^{*}\tilde{x}_{2}^{*} & h_{4}^{*}\tilde{x}_{1}^{*}\\
\end{array}\right].
\end{equation}
}

\noindent where $\tilde{x}_{1} =  {x}_{1I} + j {x}_{4Q};~\tilde{x}_{2} = {x}_{2I} + j {x}_{3Q};~\tilde{x}_{3} = {x}_{1Q}  + j {x}_{4I} \mbox{ and }~\tilde{x}_{4} = {x}_{2Q} + j {x}_{3I}.$
The variables $\tilde{x}_{1}, \tilde{x}_{2}, \cdots \tilde{x}_{4}$ are obtained using the precoding matrices $\textbf{P}$ and $\textbf{Q}$ as given in \eqref{cord_int}. The precoding matrices $\textbf{P}$ and $\textbf{Q}$ are constructed as in Subsection \ref{cont_p_q}. The relay specific matrices $\textbf{A}_{k}, \textbf{B}_{k}$ for the RS-PDSSDC in \eqref{pdssd_4_relay} are as given below,
{\small
\begin{equation*}
\textbf{A}_{1} = \left[\begin{array}{rrrr}
1 & 0 & 0 & 0\\
0  & 1 & 0 & 0\\
0 & 0 & 1 & 0\\
0 & 0 & 0 & 1\\
\end{array}\right] ;~ \textbf{B}_{2} = \left[\begin{array}{rrrr}
0 & 1 & 0 & 0\\
-1  & 0 & 0 & 0\\
0 & 0 & 0 & 1\\
0 & 0 & -1 & 0\\
\end{array}\right];~ \textbf{A}_{3} = \left[\begin{array}{rrrr}
0 & 0 & 1 & 0\\
0  & 0 & 0 & 1\\
1 & 0 & 0 & 0\\
0 & 1 & 0 & 0\\
\end{array}\right]~\mbox{ and }~\textbf{B}_{4} = \left[\begin{array}{rrrr}
0 & 0 & 0 & 1\\
0  & 0 & -1 & 0\\
0 & 1 & 0 & 0\\
-1 & 0 & 0 & 0\\
\end{array}\right].
\end{equation*}
}
$$\textbf{B}_{1} = \textbf{A}_{2} = \textbf{B}_{3} = \textbf{A}_{4} = \textbf{0}_{4}.$$
%From the above relay matrices, it can be verified that, the covariance matrix $\textbf{R}$ is a scaled identity matrix. Therefore, $\textbf{X}\textbf{R}^{-1}\textbf{X}^{H} = \textbf{R}^{-1}\textbf{X}\textbf{X}^{H}$ and hence $\textbf{X}\textbf{X}^{H}$ is given by,\\
%{\small
%\begin{equation*}
%\textbf{X}\textbf{X}^{H} = \left[\begin{array}{cccc}
%|h_{1}|^{2}\sum_{i = 1}^{4} |x_{i}|^{2} & 0 & 2h_{1}h_{3}^{*}\sum_{i = 1}^{4} x_{iI}x_{iQ} & 0\\
%0  & |h_{2}|^{2}\sum_{i = 1}^{4} |x_{i}|^{2} & 0 & 2h_{2}^{*}h_{4}\sum_{i = 1}^{4} x_{iI}x_{iQ}\\
%2h_{1}^{*}h_{3}\sum_{i = 1}^{4} x_{iI}x_{iQ} & 0 & |h_{3}|^{2}\sum_{i = 1}^{4} |x_{i}|^{2} & 0\\
%0 & 2h_{4}^{*}h_{2}\sum_{i = 1}^{4} x_{iI}x_{iQ} & 0 & |h_{4}|^{2}\sum_{i = 1}^{4} |x_{i}|^{2}\\
%\end{array}\right].\\
%\end{equation*}
%}\\
\end{example}
\begin{example}
For $N$ = 4 and $K$ = 8, we have $y = 1$ and $x = 2$. Following the construction procedure in Case \ref{case1}, $\textbf{X}_{0} = \mathbf{\Omega}_{0} \otimes \textbf{I}_{2} \mbox{ and } \textbf{X}\left(4, 8 \right)  = \textbf{G}\textbf{X}_{0}$. Therefore, $\textbf{X}\left(4, 8 \right)$ is given by.\\
{\small
\begin{equation*}
\textbf{X}\left(4, 8\right)  = \left[\begin{array}{rrrrrrrr}
h_{1}\tilde{x}_{1} & h_{1}\tilde{x}_{2} & h_{1}\tilde{x}_{3} & h_{1}\tilde{x}_{4} & 0 & 0 & 0 & 0 \\
- h_{2}^{*}\tilde{x}_{2}^{*}  & h_{2}^{*}\tilde{x}_{1}^{*} & - h_{2}^{*}\tilde{x}_{4}^{*} & h_{2}^{*}\tilde{x}_{3}^{*} & 0 & 0 & 0 & 0 \\
h_{3}\tilde{x}_{3} &  h_{3}\tilde{x}_{4} & h_{3}\tilde{x}_{1} & h_{3}\tilde{x}_{2} &  0 & 0 & 0 & 0 \\
- h_{4}^{*}\tilde{x}_{4}^{*}  & h_{4}^{*}\tilde{x}_{3}^{*} & - h_{4}^{*}\tilde{x}_{2}^{*} & h_{4}^{*}\tilde{x}_{1}^{*} &  0 & 0 & 0 & 0 \\
0 & 0 & 0 & 0 & h_{5}\tilde{x}_{1} & h_{5}\tilde{x}_{2} & h_{5}\tilde{x}_{3} & h_{5}\tilde{x}_{4}\\
0 & 0 & 0 & 0 & - h_{6}^{*}\tilde{x}_{2}^{*}  & h_{6}^{*}\tilde{x}_{1}^{*} & - h_{6}^{*}\tilde{x}_{4}^{*} & h_{6}^{*}\tilde{x}_{3}^{*}\\
0 & 0 & 0 & 0 & h_{7}\tilde{x}_{3} &  h_{7}\tilde{x}_{4} & h_{7}\tilde{x}_{1} & h_{7}\tilde{x}_{2}\\
0 & 0 & 0 & 0 & - h_{8}^{*}\tilde{x}_{4}^{*}  & h_{8}^{*}\tilde{x}_{3}^{*} & - h_{8}^{*}\tilde{x}_{2}^{*} & h_{8}^{*}\tilde{x}_{1}^{*}
\end{array}\right].\\
\end{equation*}
}\\
\end{example}
\end{case}
\begin{case}\label{case2} $N = 4y$ and $K = 4x + a$ for $a = 1, 2 \mbox{ and } 3$ : In this case, a RS-PDSSDC is constructed in two steps as given below.\\
Step(i) : Construct a RS-PDSSDC for parameters $N = 4y$ and $K = 4(x + 1)$ as given in Case \ref{case1}.\\
Step(ii) : Drop the last $4 - a$ rows of the RS-PDSSDC constructed in Step (i).\\
\begin{example}
When $N$ = 4 and $K$ = 6, the parameters $a$, $x$ and $y$ are 2, 1 and 1 respectively. As given in Case \ref{case2}, a RS-PDSSDC for $N$ = 4 and $K$ = 8 is constructed and the last 2 rows of the design are dropped. The code $\textbf{X}(4, 6)$ is as given below.
{\small
\begin{equation*}
\textbf{X}\left(4, 6 \right)  = \left[\begin{array}{rrrrrrrr}
h_{1}\tilde{x}_{1} & h_{1}\tilde{x}_{2} & h_{1}\tilde{x}_{3} & h_{1}\tilde{x}_{4} & 0 & 0 & 0 & 0 \\
- h_{2}^{*}\tilde{x}_{2}^{*}  & h_{2}^{*}\tilde{x}_{1}^{*} & - h_{2}^{*}\tilde{x}_{4}^{*} & h_{2}^{*}\tilde{x}_{3}^{*} & 0 & 0 & 0 & 0 \\
h_{3}\tilde{x}_{3} &  h_{3}\tilde{x}_{4} & h_{3}\tilde{x}_{1} & h_{3}\tilde{x}_{2} &  0 & 0 & 0 & 0 \\
- h_{4}^{*}\tilde{x}_{4}^{*}  & h_{4}^{*}\tilde{x}_{3}^{*} & - h_{4}^{*}\tilde{x}_{2}^{*} & h_{4}^{*}\tilde{x}_{1}^{*} &  0 & 0 & 0 & 0 \\
0 & 0 & 0 & 0 & h_{5}\tilde{x}_{1} & h_{5}\tilde{x}_{2} & h_{5}\tilde{x}_{3} & h_{5}\tilde{x}_{4}\\
0 & 0 & 0 & 0 & - h_{6}^{*}\tilde{x}_{2}^{*}  & h_{6}^{*}\tilde{x}_{1}^{*} & - h_{6}^{*}\tilde{x}_{4}^{*} & h_{6}^{*}\tilde{x}_{3}^{*}\\
\end{array}\right].
\end{equation*}
}\\
\end{example}
\end{case}
\begin{case}\label{case3} $N = 4y + b$ and $K = 4x + a$ where $b = 1, 2, 3$ and $a = 0, 1, 2, 3$ : In this case, RS-PDSSDCs are constructed in the following 3 steps.\\
Step (i) : Construct a RS-PDSSDC, $\textbf{X}\left(4y, 4x + a \right)$ for parameters $N = 4y$ and $K = 4x + a$ as in Case \ref{case2} using the first $4y$ variables.\\ 
Step (ii) : Construct a DOSTBC, $\textbf{X}^{\prime}\left(b, 4x + a \right)$ with parameters $N = b$ and $K = 4x + a$ using the last $b$ variables as in \cite{ZhK}.\\
Step (iii) : The RS-PDSSDC, $\textbf{X}\left(N, K \right)$ is given by juxtaposing $\textbf{X}\left(4y, 4x + a \right)$ and $\textbf{X}^{\prime}\left(b, 4x + a \right)$ as shown below,
$$\textbf{X}\left(N, K \right) = \left[ \textbf{X}\left(4y, 4x + a \right) ~ \textbf{X}^{\prime}\left(b, 4x + a \right) \right].$$
\begin{example}
When $N = 6$ and $K = 8$, the parameters $b, a , x$ and $y$ are respectively given by 2, 0, 2 and 1.\\
As in Step (i), construct $\textbf{X}\left(4, 8\right)$ as explained in Case \ref{case1} which is given below,
{\small
\begin{equation}
\label{x(6,8)}
\textbf{X}\left(4, 8\right)  = \left[\begin{array}{rrrrrrrrrrrrrrrrr}
h_{1}\tilde{x}_{1} & h_{1}\tilde{x}_{2} & h_{1}\tilde{x}_{3} & h_{1}\tilde{x}_{4} & 0 & 0 & 0 & 0\\
- h_{2}^{*}\tilde{x}_{2}^{*}  & h_{2}^{*}\tilde{x}_{1}^{*} & - h_{2}^{*}\tilde{x}_{4}^{*} & h_{2}^{*}\tilde{x}_{3}^{*} & 0 & 0 & 0 & 0\\
h_{3}\tilde{x}_{3} &  h_{3}\tilde{x}_{4} & h_{3}\tilde{x}_{1} & h_{3}\tilde{x}_{2} & 0 & 0 & 0 & 0\\
- h_{4}^{*}\tilde{x}_{4}^{*}  & h_{4}^{*}\tilde{x}_{3}^{*} & - h_{4}^{*}\tilde{x}_{2}^{*} & h_{4}^{*}\tilde{x}_{1}^{*} &  0 & 0 & 0 & 0\\
0 & 0 & 0 & 0 & h_{5}\tilde{x}_{1} & h_{5}\tilde{x}_{2} & h_{5}\tilde{x}_{3} & h_{5}\tilde{x}_{4}\\
0 & 0 & 0 & 0 & - h_{6}^{*}\tilde{x}_{2}^{*}  & h_{6}^{*}\tilde{x}_{1}^{*} & -h_{6}^{*} \tilde{x}_{4}^{*} & h_{6}^{*}\tilde{x}_{3}^{*}\\
0 & 0 & 0 & 0 & h_{7}\tilde{x}_{3} &  h_{7}\tilde{x}_{4} & h_{7}\tilde{x}_{1} & h_{7}\tilde{x}_{2}\\
0 & 0 & 0 & 0 & - h_{8}^{*}\tilde{x}_{4}^{*}  & h_{8}^{*}\tilde{x}_{3}^{*} & - h_{8}^{*}\tilde{x}_{2}^{*} & h_{8}^{*}\tilde{x}_{1}^{*}\\
\end{array}\right].
\end{equation}
}
According to Step (ii), construct a DOSTBC \cite{ZhK}, $\textbf{X}^{\prime}\left(2, 8 \right)$ as shown below, 
{\small
\begin{equation}
\label{x'(2,8)}
\textbf{X}^{\prime}\left(2, 8 \right)  = \left[\begin{array}{rrrrrrrrrrrrrrrrr}
h_{1}\tilde{x}_{5} & h_{1}\tilde{x}_{6} & 0 & 0 & 0 & 0 & 0 & 0\\
-h_{2}^{*}\tilde{x}_{6}^{*} & h_{2}^{*}\tilde{x}_{5}^{*} & 0 & 0 & 0 & 0 & 0 & 0\\
0 & 0 & h_{3}\tilde{x}_{5} & h_{3}\tilde{x}_{6} & 0 & 0 & 0 & 0\\
0 & 0 & -h_{4}^{*}\tilde{x}_{6}^{*} & h_{4}^{*}\tilde{x}_{5}^{*} & 0 & 0 & 0 & 0\\
0 & 0 & 0 & 0 & h_{5}\tilde{x}_{5} & h_{5}\tilde{x}_{6}& 0 & 0\\
0 & 0 & 0 & 0 & h_{6}^{*}\tilde{x}_{6}^{*} & h_{6}^{*}\tilde{x}_{5}^{*} & 0 & 0\\
0 & 0 & 0 & 0 & 0 & 0 & h_{7}\tilde{x}_{5} & h_{7}\tilde{x}_{6}\\
0 & 0 & 0 & 0 & 0 & 0 & -h_{8}^{*}\tilde{x}_{6}^{*} & h_{8}^{*}\tilde{x}_{5}^{*}
\end{array}\right].
\end{equation}
}
\noindent A RS-PDSSDC $\textbf{X}\left(6, 8 \right)$ is constructed by juxtaposing the designs in \eqref{x(6,8)} and \eqref{x'(2,8)} as shown below,
\begin{equation*}
\textbf{X}\left(6, 8 \right)  = \left[ \textbf{X}\left(4, 8 \right)  ~ \textbf{X}^{\prime}\left(2, 8\right) \right].\\
\end{equation*}
\end{example}
\end{case}
%$N = 4, K = 5$,
%\begin{equation*}
%\left[\begin{array}{rrrrrrrrrrrrrrrrr}
%h_{1}\tilde{x}_{1} & h_{1}\tilde{x}_{2} & h_{1}\tilde{x}_{3} & h_{1}\tilde{x}_{4} & 0 & 0 & 0 & 0\\
%- h_{2}^{*}\tilde{x}_{2}^{*}  & h_{2}^{*}\tilde{x}_{1}^{*} & - h_{2}^{*}\tilde{x}_{4}^{*} & h_{2}^{*}\tilde{x}_{3}^{*} & 0 & 0 & 0 & 0\\
%h_{3}\tilde{x}_{3} & h_{3}\tilde{x}_{4} & h_{3}\tilde{x}_{1} & h_{3}\tilde{x}_{2} & 0 & 0 & 0 & 0\\
%- h_{4}^{*}\tilde{x}_{4}^{*}  & h_{4}^{*}\tilde{x}_{3}^{*} & - h_{4}^{*}\tilde{x}_{2}^{*} & h_{4}^{*}\tilde{x}_{1}^{*} & 0 & 0 & 0 & 0\\
%0  & 0 & 0 & 0 & h_{5}\tilde{x}_{1} & h_{5}\tilde{x}_{2} & h_{5}\tilde{x}_{3} & h_{5}\tilde{x}_{4}\\
%\end{array}\right].
%\end{equation*}
\subsection{Comparison of the Symbol-rates of RS-PDSSDCs and row-monomial DOSTBCs}
\indent For a given value of $N, K$ such that $N \geq 4$ and $K \geq 4$, we proposed a method of constructing a RS-PDSSDC, $\textbf{X}\left(N, K\right)$ with a minimum value of $T$. The minimum values of $T$ provided in our construction is listed below against the corresponding values of $N$ and $K$. Against every value of $T$ for RS-PDSSDCs, the corresponding value of $T$ for row monomial DOSTBC is provided with in the braces.\\
(i) $N$ even, $K$ even :
{\small
\begin{equation*}
\label{K_even_N_even}
T ~ \geq  ~ \left\{ \begin{array}{cccccc}
4xy ~\left(8xy\right) & \mbox{if}& N = 4y, K = 4x.\\
4xy + 4x ~\left(8xy + 4x\right) & \mbox{if}& N = 4y + 2, K = 4x.\\
4xy + 4y ~\left(8xy + 4y\right) & \mbox{if}& N = 4y, K = 4x + 2.\\
4xy + 4y + 4x + 2 ~\left(8xy + 4y + 4x + 2\right)& \mbox{if}& N = 4y + 2, K = 4x + 2.
\end{array}
\right.
\end{equation*}
(ii) $N$ even, $K$ odd :
\begin{equation*}
\label{K_odd_N_even}
T ~ \geq  ~ \left\{ \begin{array}{cccccc}
4xy + 4y ~\left(8xy + 4y\right) & \mbox{if}& N = 4y, K = 4x + 1.\\
4xy + 4y + 4x + 2 ~\left(8xy + 4x + 4y + 2\right) & \mbox{if}& N = 4y + 2, K = 4x + 1.\\
4xy + 4y ~\left(8xy + 8y\right) & \mbox{if}& N = 4y, K = 4x + 3.\\
4xy + 4y + 4x + 4  ~\left(8xy + 8y + 4x + 4\right) & \mbox{if}& N = 4y + 2, K = 4x + 3.
\end{array}
\right.
\end{equation*}
(iii) $N$ odd, $K$ even :
\begin{equation*}
\label{K_even_N_odd}
T ~ \geq  ~ \left\{ \begin{array}{cccccc}
4xy + 4x ~\left(8xy + 4x\right) & \mbox{if}& N = 4y + 1, K = 4x.\\
4xy + 4y + 4x + 2 ~\left(8xy + 4x + 4y + 2\right) & \mbox{if}& N = 4y + 1, K = 4x + 2.\\
4xy + 8x ~\left(8xy + 8x\right) & \mbox{if}& N = 4y + 3, K = 4x.\\
4xy + 4y + 8x + 4  ~\left(8xy + 4y + 8x + 4\right) & \mbox{if}& N = 4y + 3, K = 4x + 2.
\end{array}
\right.
\end{equation*}
(iv) $N$ odd, $K$ odd :
\begin{equation*}
\label{K_odd_N_odd}
T ~ \geq  ~ \left\{ \begin{array}{cccccc}
4xy + 4y + 4x + 1 \\
\left(\mbox{max} \left(8xy + 4x + 2y + 1,~8xy + 4y + 2x + 1\right)\right) & \mbox{if}& N = 4y + 1, K = 4x + 1.\\
4xy + 4y + 4x + 3 \\
\left(\mbox{max} \left( 8xy + 6y + 4x + 3,~8xy + 8y + 2x + 2\right) \right) & \mbox{if}& N = 4y + 1, K = 4x + 3.\\
4xy + 8x + 4y + 3 \\
\left(\mbox{max} \left( 8xy + 6x + 4y + 3,~8xy + 8x + 2y + 2\right) \right) & \mbox{if}& N = 4y + 3, K = 4x + 1.\\
4xy + 4y + 8x + 8 \\
\left(\mbox{max} \left( 8xy + 8x + 6y + 6,~8xy + 8y + 6x + 6\right) \right) & \mbox{if}& N = 4y + 3, K = 4x + 3.
\end{array}
\right.
\end{equation*}
}
\indent From the above comparison, it can be observed that, for a given value of $N$ and $K$, a RS-PDSSDC, $\textbf{X}(N, K)$ is constructed with a smaller value of $T$ compared to that of a row monomial DOSTBC, there by providing higher values of the symbol- rate, $\frac{N}{T}$. In particular, when $N$ is a multiple of 4 and $K$ is of the form 0 or 3 modulo 4, row monomial DOSTBCs need double the number of channel uses in the second phase compared to that of RS-PDSSDCs. It can also be observed that improvement in the values of $T$ for a RS-PDSSDC is not significant when $K$ and $N$ are both odd.
\section{On the construction of S-PDSSDCs with out precoding at the source}\label{sec5}
The existence of high rate S-PDSSDCs has been shown in the preceding sections, when the source performs co-ordinate interleaving of information symbols before broadcasting it to all the relays. In this setup, the relays do not perform coordinate interleaving of their received symbols. One obvious question that needs to be answered is, whether linear processing of the received symbols at the relays alone is sufficient to construct S-PDSSDCs when the source doesn't perform coordinate interleaving of information symbols. In other words, is coordinate interleaving of the information symbols at the source necessary to construct PDSSDCs$?$. The answer is, yes.\\
\indent In the rest of this section, we show that PDSSDCs cannot be constructed by linear processing of the received symbols at the relays when the source transmits the information symbols to all the relays with out precoding.
Towards that end, let the $k^{th}$ relay be equipped with a pair of matrices, $\textbf{A}_{k}$ and $\textbf{B}_{K} \in \mathbb{C}^{N \times T}$ which perform linear processing on the received vector. Excluding the additive noise component, the received vector at the $k^{th}$ relay is 
$h_{k}\textbf{s} = \left[h_{k}x_{1}, h_{k}x_{2}, \cdots h_{k}x_{N} \right]$ where $x_{i}$'s are information symbols and $h_{k}$ is any complex number. The matrices $\textbf{A}_{k}$, $\textbf{B}_{K} \in \mathbb{C}^{N \times T}$ act on the vector $h_{k}\textbf{s}$ to generate a vector of the form,  
%The matrices $\textbf{P}_{k}$ and $\textbf{Q}_{k}$ perform coordinate interleaving on $h_{k}\textbf{s}$ to obtain $\widetilde{h_{k}\textbf{s}}$ given by,
%\begin{equation*}
%\widetilde{h_{k}\textbf{s}} = h_{k}\textbf{s}\textbf{P}_{k} + h_{k}^{*}\textbf{s}^{*}\textbf{Q}_{k}.
%\end{equation*}
%The matrices $\tilde{\textbf{A}}_{k}$ and $\tilde{\textbf{B}}_{K}$ act on $\widetilde{h_{k}\textbf{s}}$  to contribute to the $k^{th}$ row of the DSTBC as
%\begin{equation*}
%\widetilde{h_{k}\textbf{s}}\tilde{\textbf{A}}_{k} + \widetilde{h_{k}\textbf{s}}^{*}\tilde{\textbf{B}}_{k}
%\end{equation*}
\begin{equation}
\label{k_row}
h_{k}\textbf{s}\textbf{A}_{k} + h_{k}^{*}\textbf{s}^{*}\textbf{B}_{k}
\end{equation}
From \eqref{k_row}, the non zero entries of $h_{k}\textbf{s}\textbf{A}_{k} + h_{k}^{*}\textbf{s}^{*}\textbf{B}_{k}$ contains complex variables of the form, $\pm~ x, \pm~ x^{*}$ or multiples of these by $j$ where $j = \sqrt{-1}$ and 
\begin{equation}
\label{real_variables}
\mbox{Re}(x), \mbox{Im}(x) \in \left\lbrace \mbox{Re}\left(h_{k}x_{n}\right) , \mbox{Im}\left(h_{k}x_{n}\right) ~|~ 1 \leq n \leq N \right\rbrace.
\end{equation}
To be precise, $\mbox{Re}\left(h_{k}x_{n}\right)$ and $\mbox{Im}\left(h_{k}x_{n}\right)$ are given by $h_{kI}x_{nI} - h_{kQ}x_{nQ}$ and $h_{kI}x_{nQ} + h_{kQ}x_{nI}$ respectively. The above vector can also contain linear combination of the specified above complex variables.\\
\indent From Definition \ref{D_pdssd}, non-zero entries of the $k^{th}$ row of a PDSSDCs are of the form $\pm~ h_{k}\tilde{x}_{n}$, $\pm~ h_{k}^{*}\tilde{x}_{n}^{*}$ where $\tilde{x}_{nI}$ and $\tilde{x}_{nQ}$ can be in-phase and quadrature components of two different information variables. Since $h_{k}$ is any complex variable, from \eqref{real_variables}, linear processing of the received symbols at the relays alone cannot contribute variables of the form $\pm~ h_{k}\tilde{x}_{n}$, $\pm~ h_{k}^{*}\tilde{x}_{n}^{*}$. Therefore, S-PDSSDCs cannot be constructed by linear processing of the received symbols at the relays alone when the source transmits the information symbols to all the relays with out precoding.
\begin{remark} 
If $h_{k}$'s are real variables, then $\mbox{Re}(x), \mbox{Im}(x) \in \left\lbrace h_{k}\mbox{Re}\left(x_{n}\right) , h_{k}\mbox{Im}\left(x_{n}\right) ~|~ 1 \leq n \leq N \right\rbrace$ in which case, the non-zero entries of the $k^{th}$ row can be of the form $\pm~ h_{k}\tilde{x}_{n}$, $\pm~ h_{k}\tilde{x}_{n}^{*}$ where $\tilde{x}_{nI}$ and $\tilde{x}_{nQ}$ can be in-phase and quadrature components of two different information variables for any real variable $h_{k}$. This aspect has been well studied in \cite{RaR1}, \cite{SCS} and \cite{ZhK2} where the relays are assumed to have the knowledge of phase component of their corresponding channels thereby making  $h_{k}$, a real variable. Hence, with the knowledge of partial CSI at the relays, high rate distributed SSD codes can be constructed by linear processing at the relays alone.  i.e, with the knowledge of partial CSI at the relays, the source need not perform precoding of information symbols before transmitting to the all the relays in the first phase.
\end{remark}
%%%%%%%%%%%%%%%%%%%%%%%
\section{On the full diversity of RS-PDSSDCs}\label{sec6}
%%%%%%% Fig.3 %%%%%%%%%%%%%%%%%%%
\indent In this section, we consider the problem of designing a two-dimensional signal set, $\Lambda$ such that a RS-PDSSDC with variables $x_{1}$, $x_{2}, \cdots x_{N}$ taking values from $\Lambda$ is fully diverse. Since every codeword of a RS-PDSSDC (Definition \ref{D_pdssd}) contains complex variables $h_{k}$'s, Pairwise error probability (PEP) analysis of RS-PDSSDCs is not straightforward. The authors do not have conditions on the choice of a complex signal set such that a RS-PDSSDC is fully diverse. However, we make the following conjecture.\\
\textbf{Conjecture} : A RS-PDSSDC in variables $x_{1}$, $x_{2}, \cdots x_{N}$ is fully diverse if the variables takes values from a complex signal set say, $\Lambda$ such that the difference signal set $\Delta \Lambda$ given by
\begin{equation*}
\Delta \Lambda = \left\lbrace a - b~|~a, b \in \Lambda \right\rbrace 
\end{equation*}
does not have any point on the lines that are $\pm~ 45$ degrees in the complex plane apart from the origin.\\
\indent In the rest of this section, we provide simulation results on the performance comparison of a RS-PDSSDC, $\textbf{X}\left(4, 4 \right)$ (given in \eqref{pdssd_4_relay}) and a row-monomial DOSTBC, $\textbf{X}^{\prime}\left(4, 4 \right)$ (given in \eqref{dostbc_4_relay}) in terms of Symbol Error Rate (SER) (SER corresponds to errors in decoding a single complex variable). The SER comparison is provided in Figure \ref{compare}. Since the design in \eqref{pdssd_4_relay} has double the symbol-rate compared to the design in \eqref{dostbc_4_relay}, for a fair comparison, 16 QAM and a 4 point rotated QPSK are used as signal sets for $\textbf{X}^{\prime}\left(4, 4 \right)$ and $\textbf{X}\left(4, 4 \right)$ respectively to maintain the rate of 1 bits per second per Hertz. The average SNR per channel use for $\textbf{X}^{\prime}\left(4, 4 \right)$ and $\textbf{X}\left(4, 4 \right)$ respectively are $\frac{2p_{1}p_{2}}{p_{1} + 1 + 2p_{2}}$ and $\frac{4p_{1}p_{2}}{p_{1} + 1 + 4p_{2}}$. In order to maintain the same Signal to Noise ratio (SNR), for the design $\textbf{X}^{\prime}\left(4, 4 \right)$, every relay (other than the source) uses twice the power as that for the design $\textbf{X}\left(4, 4 \right)$. The class of DOSTBCs are shown to be fully diverse in \cite{ZhK}. From Figure \ref{compare}, it is observed that $\textbf{X}\left(4, 4 \right)$ provides full diversity, since the SER curve moves parallel to that of $\textbf{X}^{\prime}\left(4, 4 \right)$. It can be noticed from Figure \ref{compare} that the design $\textbf{X}\left(4, 4 \right)$ performs better than $\textbf{X}^{\prime}\left(4, 4 \right)$ by close to 2-3 db.
\begin{equation}
\label{dostbc_4_relay}
\textbf{X}^{\prime}\left(4, 4 \right)  = \left[\begin{array}{rrrrrrrr}
h_{1}x_{1} &  h_{1}x_{2} & h_{1}x_{3} & h_{1}x_{4} &  0 & 0 & 0 & 0 \\
- h_{2}^{*}x_{2}^{*}  & h_{2}^{*}x_{1}^{*} & - h_{2}^{*}x_{4}^{*} & h_{2}^{*}x_{3}^{*} &  0 & 0 & 0 & 0 \\
0 & 0 & 0 & 0 & h_{3}x_{1} &  h_{3}x_{2} & h_{3}x_{3} & h_{3}x_{4}\\
0 & 0 & 0 & 0 & - h_{4}^{*}x_{2}^{*}  & h_{4}^{*}x_{1}^{*} & - h_{4}^{*}x_{4}^{*} & h_{4}^{*}x_{3}^{*}\\
\end{array}\right].
\end{equation}

\section{Conclusion and Discussion}\label{sec7}
\indent We considered the problem of designing high rate, single-symbol decodable DSTBCs when the source is allowed to perform co-ordinate interleaving of information symbols before transmitting it to all the relays. We introduced PDSSDCs (Definition \ref{D_pdssd}) and showed that, DOSTBCs are a special case of PDSSDCs.\\
A special class of PDSSDCs having semi-orthogonal property were defined (Definition \ref{D_so_updssd}). A subset of S-PDSSDCs called RS-PDSSDCs is studied and an upperbound on the maximal rate of such codes is derived. The bounds obtained for RS-PDSSDC are shown to be approximately twice larger than that of DOSTBCs. A systematic construction of RS-PDSSDCs are presented for the case when the number of relays, $K \geq 4$. Codes achieving the bound are found when $K$ is of the form 0 or 3 modulo 4. For the rest of the choices of $K$, S-PDSSDCs meeting the above bound on the rate are not known. The constructed codes are shown to have rate higher than that of row monomial DOSTBCs.\\
Some of the possible directions for future work are as follows:
\begin{itemize}
\item In this paper, we studied a special class of PDSSDCs called Unitary PDSSDCs (See Definition \ref{u_p_dssd}). The design of high rate Non-Unitary PDSSDCs is an interesting direction for future work.
\item The authors are not aware of RS-PDSSDCs achieving the bound on the maximum rate other than the case when $K$ is 0 or 3 modulo 4. The upperbounds on the maximum rate for rest of the values of $K$ possibly can be tightened.
\item A class of S-PDSSDCs was defined, by making every row of the PDSSDC \textbf{R}-non-orthogonal to atmost one of its rows. It will be interesting whether the bounds on the maximal rate of PDSSDCs can be increased further by making a row \textbf{R}-non-orthogonal to more than one of its rows.
\end{itemize}

\begin{center}
LIST OF FIGURES
\end{center}
1.  Wireless relay network.\\
2.  Various class of SSD codes for cooperative networks.\\
3.  Performance comparison of S-PDSSDC and DOSTBC for N = 4 and K = 4 with 1 bps/Hz.\\
%%%%%%% Fig.1 %%%%%%%%%%
\newpage
\begin{figure}
\centering
\includegraphics[width=2.5in]{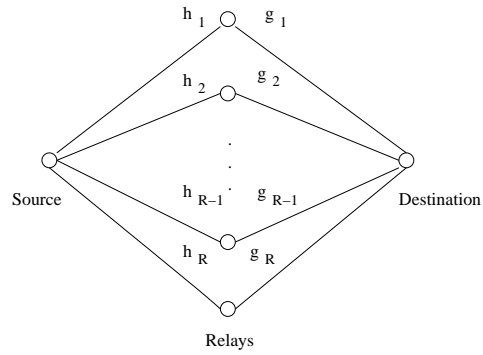}
\caption{Wireless relay network}
\label{model_network}
\end{figure}
%%%%%%%%%%%%%%%%%%%%%%
%%%%%%Fig.2%%%%%%%%%%%%%%%
\begin{figure}
\centering
\includegraphics[width=2.5in]{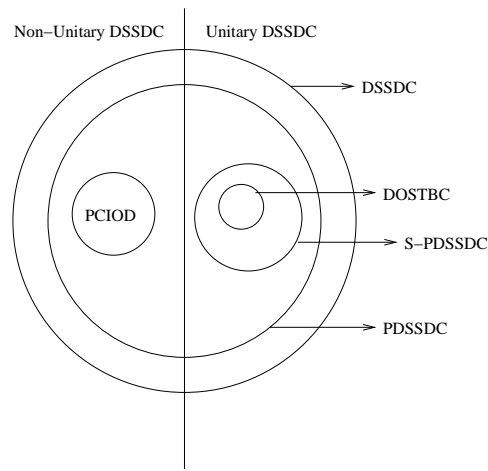}
\caption{Various class of SSD codes for cooperative networks}
\label{various_DSSD}
\end{figure}
%%%%%%%%%%%%%%%%%%%%%%%%%
\begin{figure}
\centering
\includegraphics[width=2.5in]{}
\caption{Performance comparison of S-PDSSDC and DOSTBC for N = 4 and K = 4 with 1 bps/Hz}
\label{compare}
\end{figure}
%%%%%%%%%%%%%%%%%%%%%%%%%%%%%%%

\begin{thebibliography}{1}
\bibitem{SEA1}
A. Sendonaris, E. Erkip and B. Aazhang,  "User cooperation diversity-Part 1: Systems description,'' \emph{IEEE Trans. comm.}, vol.\ 51,  pp, 1927-1938, Nov 2003.                                                                                    \bibitem{SEA2}
A. Sendonaris, E. Erkip and B. Aazhang, "User cooperation diversity-Part 1: implementation aspects and performance analysis,'' \emph{IEEE Trans. inform theory.}, vol. 51,  pp. 1939-1948, Nov 2003.
                                                                                 
\bibitem{LaW}
J. M. Laneman and G. W. Wornell,  "Distributed space time coded protocols for exploiting cooperative diversity in wireless network'' \emph{IEEE Trans. Inform. Theory.}, vol. 49, pp. 2415-2425, Oct. 2003.                                        

\bibitem{NBK} R. U. Nabar, H. Bolcskei and F. W. Kneubuhler, "Fading relay channels: performance limits and space time signal design,'' \emph{IEEE Journal on Selected Areas in Communication}., vol. 22, no. 6, pp. 1099-1109, Aug. 2004.                                                                                           

\bibitem{JiH1}
Yindi Jing and Babak Hassibi, "Distributed space time coding in wireless relay networks"
\emph{IEEE Trans Wireless communication,} vol. 5, No 12, pp. 3524-3536, December 2006.                                                                                           

\bibitem{TJC}
V. Tarokh, H. Jafarkhani and A. R. Calderbank, ''Space-time block codes from orthogonal designs", \emph{IEEE Trans. on Info.Theory}, vol. 45, pp.1456-1467, July 1999.

\bibitem{TiH}
O.Tirkonen and A. Hottinen, '' Square matrix embeddable space time block codes from complex signal constellations'', 
\emph{IEEE Trans. on Info.Theory}, vol. 48, pp.384-395, Feb 2002.

\bibitem{Li}
Xue-Bin Liang, "Orthogonal Designs with Maximal rates,'' \emph{IEEE Trans. Information theory}, vol. 49, No.10, pp.2468 - 2503, Oct. 2003.

\bibitem{WWX}
H. Wang, D.Wang, X-G.Xia, ''On Optimal QOSTBC with minimal decoding complexity,'' submitted to  \emph{IEEE Trans. Information theory}.

\bibitem{LiX}
Xue-Bin Liang, X.-G. Xia, "On the non existence of rate one generalised complex orthogonal designs,'' \emph{IEEE Trans. Information theory}, vol. 49, pp.2984-2989, Nov. 2003.

\bibitem{KhR}
Zafar Ali Khan, Md., and B. Sundar Rajan, ''Single Symbol Maximum Likelihood Decodable Linear STBCs'', \emph{IEEE Trans. on Info.Theory}, vol. 52, No. 5, pp.2062-2091, May 2006.

\bibitem{SaR}
Sanjay Karmakar and B. Sundar Rajan, ''Minimum-decoding-complexity maximum-rate space-time block codes from Clifford algebras,'' in the proceedings of \emph{IEEE ISIT}, Seattle, USA, July 09-14, 2006, pp.788-792.

\bibitem{KiR}
Kiran T. and B. Sundar Rajan, ''Distributed space-time codes with reduced decoding complexity,''
in the proceedings of \emph{IEEE ISIT}, Seattle, USA, July 09-14, 2006, pp.542-546.

\bibitem{JiJ1}
Y. Jing and H. Jafarkhani, ''Using orthogonal and quasi orthogonal designs in wireless relay networks'' in the proceedings of \emph{Globecom 2006}. To appear in \emph{IEEE Trans. Information theory}.
                                      
\bibitem{RaR1}
G.Susinder Rajan and B. Sundar Rajan, "A Non-orthogonal Distributed Space-Time Coded Protocol, Part-II: Code construction and DM-G Tradeoff,'' Proceedings of IEEE Information Theory Workshop (ITW 2006), Chengdu, China, October 22-26, 2006, pp.488-492.

\bibitem{RaR2}
G.Susinder Rajan and B. Sundar Rajan, "Distributed space-time codes for cooperative networks with partial CSI,''
Proceedings of \emph{IEEE WCNC}, Hong Kong, 11-15 March 2007.

\bibitem{RAR}
G.Susinder Rajan, Anshoo Tandon and B. Sundar Rajan, "On four-group ML decodable distributed space time codes for cooperative communication,'' Proceedings of \emph{IEEE WCNC}, Hong Kong, 11-15 March 2007.

\bibitem{ZhK}
Zhihang Yi and Il-Min Kim,  "Single-Symbol ML decodable Distributed STBCs for Cooperative Networks,'' \emph{IEEE Trans. Information theory}, vol 53, No 8, pp. 2977 to 2985, August 2007.

\bibitem{SCS}
D. Sreedhar, A. Chockalingam and B. Sundar Rajan,  "Single-Symbol ML decodable Distributed STBCs for Partially-Coherent Cooperative Networks,'' submitted to \emph{IEEE Trans. Information theory}, August 2007. Available online at ArXiv cs.IT/07083019.

\bibitem{ZhK2}
Zhihang Yi and Il-Min Kim, "The impact of Noise Correlation and Channel Phase Information on the Data-Rate of the Single-Symbol ML Decodable Distributed STBCs," Submitted to \emph{IEEE Trans. Information theory}, Aug 2007. Available online at ArXiv cs.IT/07083387.

\bibitem{HaR}
Harshan J and B. Sundar Rajan, "High Rate Single-Symbol Decodable Precoded DSTBCs for Cooperative Networks", a technical report of \emph{DRDO-IISc Programme on Advanced Research in Mathematical Engineering}, Report No - TR-PME-2007-08, Aug 2007. Also available online in ArXiv cs.IT/07084214.

\bibitem{HaH}
B. Hassibi and B. Hochwald,  "High-Rate codes that are linear in space and time,'' \emph{IEEE Trans. Information theory}, vol 48, pp. 1804 to 1824, July 2002.


\end{thebibliography}
\end{document}